\algrenewcommand\algorithmicfor{\textbf{Foreach}}
\algrenewcommand\algorithmicwhile{\textbf{While}}
\algrenewcommand\algorithmicdo{\textbf{Do}}
\newcommand{\keywords}[1]{\par\addvspace\baselineskip
\noindent\keywordname\enspace\ignorespaces#1}
 \gdef\xxxmark{%
   \expandafter\ifx\csname @mpargs\endcsname\relax 
     \expandafter\ifx\csname @captype\endcsname\relax 
       \marginpar{xxx}
     \else
       xxx 
     \fi
   \else
     xxx 
   \fi}
 \gdef\xxx{\@ifnextchar[\xxx@lab\xxx@nolab}
 \long\gdef\xxx@lab[#1]#2{{\bf [\xxxmark #2 ---{\sc #1}]}}
 \long\gdef\xxx@nolab#1{{\bf [\xxxmark #1]}}
\newcommand{\yinew}{{y_{i,\text{new}}}}
\newcommand{\yiold}{{y_{i,\text{old}}}}
\begin{document}
\title{A Unified Approach to Online Allocation Algorithms via Randomized Dual Fitting}
\titlerunning{A Unified Approach to Online Allocation Algorithms via Randomized Dual Fitting}
\author{Rad Niazadeh$^{\dag}$ \and Robert D. Kleinberg$^{\dag}$
\thanks{Partially supported by NSF grant AF-0910940, AFOSR grant FA9550-09-1-0100, a Google Research Grant, and a Microsoft New Faculty Fellowship.}}
\authorrunning{A Unified Approach to the Online Allocation Algorithms via Randomized Dual Fitting}
\institute{$^{\dag}$Cornell University, Department of Computer Science.}
\toctitle{A Unified Approach to Online Allocation Algorithms via Randomized Dual Fitting}
\maketitle
\begin{abstract}
We present a unified framework for designing and analyzing algorithms for online budgeted allocation problems (including online matching) and their generalization, the Online Generalized Assignment Problem (OnGAP). These problems have been intensively studied as models of how to allocate impressions for online advertising. In contrast to previous analyses of online budgeted allocation algorithms (the so-called ``balance'' or ``water-filling'' family of algorithms) our analysis is based on the method of randomized dual fitting, analogous to the recent analysis of the RANKING algorithm for online matching due to Devanur et al. Our main contribution is thus to provide a unified method of proof that simultaneously derives the optimal competitive ratio bounds for online matching and online fractional budgeted allocation. The same method of proof also supplies $(1-1/e)$ competitive ratio bounds for greedy algorithms for both problems, in the random order arrival model; this simplifies existing analyses of greedy online allocation algorithms with random order of arrivals, while also strengthening them to apply to a larger family of greedy algorithms. Finally, for the more general OnGAP problem, we show that no algorithm can be constant-competitive; instead we present an algorithm whose competitive ratio depends logarithmically on a certain parameter of the problem instance, and we show that this dependence cannot be improved.

\keywords{Online algorithms, Online matching, Online budgeted allocation, AdWords, Primal-dual algorithms}
\end{abstract}

\section{Introduction}
\label{sec-intro}

Online allocation problems, in which items arriving sequentially must be allocated (either integrally or fractionally) to capacitated servers at the time of their arrival, are one of the most long-standing areas of investigation in online algorithms. In the past decade they have gained a renewed importance as models of how to allocate impressions for online advertising, beginning with the seminal work of Mehta et al. on the AdWords problem~\citep{MSVV}. As results on different versions of online matching and AdWords have proliferated, several recurring themes have become apparent, among them the vital role of primal-dual methods and the omnipresence of the constant $1-1/e$. Despite such striking similarities, the original methods used to prove these results were surprisingly disparate. In this paper we capitalize on a new randomized dual-fitting method introduced by Devanur et al.~\citep{DJK} to show how all of these results can be derived using a common method of proof.\footnote{The final section of~\citep{DJK} sketches derivations of some of these results without providing the full details of the proofs.} In so doing, we also strengthen and generalize some of the existing results.

Two families of algorithms predominate the literature on online allocation problems: the RANKING algorithms originating in the work of Karp, Vazirani, and Vazirani~\citep{KVV}, and the BALANCE algorithms (also known as water-filling) originating in the work of Kalyanasundaram and Pruhs~\citep{KP}. The former family of algorithms associates a random priority to each vertex on the offline side of the bipartite graph that models the potential assignments. Arriving items on the online side are then alocated greedily to the highest-priority neighbor with available capacity. The latter family of algorithms visualizes the fractional allocation process as a process in which each arriving node fills water into buckets corresponding to its neighbors, giving priority to the buckets with the lowest water levels so as to mqximize the minimum neighboring water level when the fractional assignment is completed. Both families of algorithms are known to achieve competitie ratio $1-1/e$ for the respective problems to which they are applied. The original analysis of RANKING~\citep{KVV} and subsequent simplifications~\citep{AGKM11,BM08,GM08} were based on combinatorial and probabilistic arguments. The BALANCE algorithm and its generalization to the AdWords problem~\citep{MSVV} were originally analyzed by direct manipulation of inequalities relating the algorithm's solution to the optimal one. A simplified analysis of BALANCE (and its fractional counterpart, waterfilling) using the primal-dual framework was subsequently presented in~\citep{BJN}.

A related chain of results, also dating back to the seminal work of 
Karp, Vazirani, and Vazirani~\citep{KVV}, concerns the performance of na\"{i}ve 
greedy algorithms for these problems, when the input sequence is presented
in uniformly random order. (We henceforth refer to this as the \emph{random
order arrival} model.) They showed that the greedy algorithm for online 
matching is $(1-1/e)$-competitive in the random order arrival model\footnote{%
It is important to mention that in our competitive analysis, in the worst-case model the competitive ratio is defined to be $\underset{\mathscr{I}}{\min}\frac{\textbf{ALG}(\mathscr{I})}{\textbf{OPT}(\mathscr{I})}$ in which $\text{OPT}$ denotes the optimal offline algorithm. However, in the the random order arrival, the performance of the online algorithm is measured against the worst-case input with random order of arrival on the online side.}.
Goel and Mehta~\citep{GM08} presented a simplified proof of this result
while correcting a technical mistake, and they also proved that a
greedy algorithm for online budgeted allocation is $(1-1/e)$-competitive
in the random order arrival model,
when bids are much less than budgets
(henceforth, the \emph{infinitesimal-bid limit}), 
using a more complicated (and still
essentially combinatorial) analysis. 

Recent work by Devanur et al.~\citep{DJK} points toward the possibility of
unifying these proofs under a common primal-dual framework. They provided 
a primal-dual analysis of the RANKING algorithm for online matching, that
differs from the standard 
primal-dual analysis of the waterfilling algorithm (e.g.,~\citep{BJN})
in at least
two key respects: first, it uses the method of dual fitting 
(meaning that it constructs a dual solution whose value is exactly
equal to the primal one, then rescales it to achieve dual
feasibility); second, it constructs the dual solution via a
randomized procedure that only achieves feasibility in 
expectation (meaning that the expectation of the constructed
dual vector is feasible for the dual LP, but a given execution
of the randomized algorithm may fail to construct a feasible dual).
There are striking similarities between the primal-dual 
analysis of waterfilling in~\citep{BJN} and the primal-dual analysis
of RANKING in~\citep{DJK}; most notably, 
a function $g : [0,1] \to [0,1]$ satisfying the integral
equation $\int_{0}^{\theta} g(y) \, dy + [1 - g(\theta)] = 1 - 1/e$
subject to the boundary condition $g(1)=1$ arises in both.
The final section of~\citep{DJK} sketches an analysis of
a primal-dual algorithm for online budgeted allocation that partly bridges
the gap between the methods of proof in~\citep{BJN}
and~\citep{DJK} by using the dual-fitting method
to establish a competitive ratio of $1-1/e$ 
in the infinitesimal-bid limit.
It also reports, without proof, an observation due to
Aman Dhesi~\citep{AD} that a similar method of proof
can be applied to rederive the theorem of~\citep{GM08}
that the greedy algorithm for online budgeted allocation
is $(1-1/e)$-competitive in the infinitesimal-bid limit,
assuming random order of arrival.

\paragraph{Our contributions.}
Our main contribution in this paper is a unified method
of analysis for deriving and extending these results, using
the randomized dual-fitting method introduced in~\citep{DJK}.
We start by presenting these ideas in the context of online
matching (both integral and fractional versions).
Section~\ref{sec-onlinematching} rederives the $1-1/e$
competitive ratio for the RANKING and waterfilling 
algorithms using essentially the same proof for both
results. It also proves a competitive ratio of $1-1/e$,
in the random order arrival model,
for a class of online fractional matching algorithms
that we call \emph{greedy allocation-monotone algorithms},
again using the randomized dual-fitting technique.
The standard greedy algorithm for online (integral) matching
is a special case of a greedy allocation-monotone fractional
matching algorithm, and thus our proof rederives the 
result of~\citep{GM08,KVV} that Greedy is $(1-1/e)$-competitive
for online matching in the random order arrival model, 
while generalizing the competitive ratio bound to a broader
class of greedy algorithms. (Essentially, unlike the greedy
algorithms analyzed by~\citep{GM08,KVV}, our generalization 
applies to greedy
algorithms in which different vertices on the online side
may use different ordering of the offline side to break ties 
among neighbors. It does not appear that the techniques
of~\citep{GM08,KVV} can be used to analyze such algorithms
in the random order arrival model.) Alongside these results
we also present an improved analysis (again using the primal-dual
method) of the water-filling algorithm for online fractional
matching in degree-bounded graphs, which shows that its
competitive ratio exceeds $1-1/e$ by $\Omega(1/d)$ when
the vertices on the online side have degree bounded by $d$.

In Section~\ref{sec-budg-alloc} we extend these results 
(excluding the bounded-degree one) to 
the online budgeted allocation problem. First we present
a \emph{virtual water-filling} algorithm that is a fractional
counterpart to the AdWords algorithms of~\citep{BJN,MSVV}.
Like~\citep{BJN} we analyze this algorithm using the primal-dual
method, but we depart from that paper in providing a 
randomized dual-fitting analysis that directly generalizes our
analysis of the waterfilling algorithm in Section~\ref{sec-onlinematching}.
We apply the same proof technique yet again to show
that any greedy allocation-monotone
algorithm for online fractional budgeted allocation is
$(1-1/e)$-competitive in the random order arrival model.
An important observation is that, while fractional budgeted
allocation and integral budgeted allocation are essentially
equivalent in the infinitesimal-bid limit under worst-case
arrival order, the two problems are inequivalent in the
random order arrival model. This is because in the random
order arrival model of the fractional allocation problem,
non-infinitesimal bids show up in random order but each of them must be
allocated all at once, at the time of its arrival. When we
replace each non-infinitesimal bid with a multiset of 
infinitesimal copies and permute that multiset into random
order, the event that all of the copies of one bid arrive
consecutively has vanishingly small probability. Thus, our
result for online fractional budgeted allocation in the
random order arrival model has no counterpart in the prior 
literature.

Finally, we consider the online generalized assignment problem
(OnGAP) which generalizes online budgeted allocation by allowing
the \emph{bid value} $b_{i,j}$ for assigning item $i$ to bidder $j$ 
(the amount that it constributes to the objective function)
to differ from its \emph{weight} $w_{i,j}$ (the amount of 
bidder $i$'s budget that it consumes). Under the assumption
that the ratio $b_{i,j}/w_{i,j}$ is always in the range 
$[1,\eta)$ for some $\eta > 1$, we provide an algorithm
with competitive ratio $\Omega(\frac{1}{\log (\eta)})$
for fractional OnGAP. The algorithm works by reducing the
problem to online budgeted allocation, while losing a 
logarithmic factor in the reduction. We also provide a corresponding
impossibility result: the competitive ratio of any fractional
OnGAP algorithm is bounded above by $O(\frac{1}{\log(\eta)})$
in the worst case. (The same problem in a \emph{free disposal}
model was considered by~\citep{FKMMP}, who provided an algorithm
with competitive ratio $1-1/e$ in that model.)

\section{Randomized dual fitting approach to online bipartite matching}
\label{sec-onlinematching}

We start developing our unified \textit{dual fitting} primal-dual framework
for design and analysis of online allocation algorithms by investigating the central problem of online fractional matching. To elaborate on our technique, we consider two models. The first model is the worst-case, in which we provide a simple dual fitting randomized primal-dual analysis of the \textit{water-filling} algorithm, a.k.a BALANCE~\citep{KP}. This is the first analysis for this algorithm that is exactly analogous to the randomized primal-dual analysis of RANKING in~\citep{DJK}. We then expand our proof technique to the random order arrival model, in which we provide a simple dual fitting randomized primal-dual analysis of \textit{allocation-monotone greedy} algorithms (which are defined later). This in turn  rederives a stronger version of the result of~\citep{GM08} on the competitive ratio of the greedy algorithm in the integral online bipartite matching problem under random order arrival.

Our contribution in this section is twofold. On the one hand our techniques are based on simpler primal-dual proofs, and on the other hand they shed some insight on how to unify the analyses of all different online algorithms for the online bipartite matching problem under different models.
\subsection{Notations}
\label{secnot-onlinematching}
Suppose we have a bipartite graph $G=(L\cup R,E)$. In the online bipartite matching problem, $L$ is the \textit{offline} set of vertices known at the beginning and $R$ is the \textit{online} set of vertices. Upon arrival of each vertex $j\in R$ the set of offline neighbors of $j$, denoted by $N(j)$, is revealed to the algorithm. The algorithm can then match $j$ to its neighbors. Fractional solutions are allowed, denoted by $\{x_{i,j}\}$, and the objective is to maximize the value of the algorithm's fractional matching. The following is the standard linear programming formulation for this problem and its dual.
\begin{align}
&\text{maximize~~} \quad \sum_{(i,j)\in E}{x_{i,j}}~~~\text{s.t.}\quad
&\text{minimize~~} \sum_{i\in L }{\alpha_i}+\sum_{j\in R}{\beta_j}~~~\text{s.t.} \nonumber\\
&\sum_{j\in N(i)}{x_{i,j}}\leq 1 , \quad i\in L\quad
&\alpha_i+\beta_j\geq 1\quad (i,j) \in E \nonumber \\
&\sum_{i\in N(j)}{x_{i,j}}\leq 1, \quad j\in R\quad 
&\alpha_i\geq 0,\quad i\in L \nonumber\\
&x_{i,j} \geq 0, \quad (i,j)\in E \quad 
&\beta_j\geq 0,  \quad j\in R \label{lp-matching}
\end{align}
Furthermore, let $y_i\triangleq \sum_{j\in N(i)}{x_{i,j}}$ denote the ``water-level" of vertex $i$. These terms reflect an analogy between the fractional matching and spreading units of water among neighboring buckets of unit capacity. Let us also define $\yinew$ and $\yiold$ to be the water level of $i$ after processing and before processing a vertex $j \in R$, respectively (The identity of the relevant vertex $j$ will always be clear from context, so our notation $\yinew, \yiold$ does not directly refer to $j$).

In our competitive analysis, we use the worst-case model and the random order arrival model. In the worst-case model the competitive ratio is 
measured against the oblivious non-adaptive adversary (worst-case input), while in the the random order arrival, the performance of the online algorithm is measured against the worst-case input with the random order of arrival on the online side, i.e. we assume that vertices in $R$ arrive based on a random permutation sampled from the uniform distribution. 
\subsection{Our framework}
\label{sec-framework}
In an abstract sense, the dual fitting framework can be described in the following way. The goal in this framework is to construct a dual solution accompanying the primal algorithm's solution, such that it satisfies the following two properties.
\begin{itemize}
\item \textbf{ Property (1)}: The dual objective value is equal to the primal objective value.
\item\textbf{ Property (2)}: The dual becomes feasible in expectation if it is divided by $F$, where $F\in [0,1]$ is a constant.
\end{itemize}
If we can successfully construct such a dual solution, then we can divide the dual by $F$  to make it feasible in expectation, i.e. $\alpha_i\leftarrow \alpha_i/F,\beta_j\leftarrow \beta_j /F,\forall i,j $.  The lower bound $F$ for the competitive ratio of the algorithm is then guaranteed immediately comparing the objective value of the primal algorithm with the objective value of the expectation of the dual solution (which is feasible for the dual LP). 

To make a dual solution satisfying the first property, once vertex $j$ arrives we update $\beta_j$ in the following way.
\begin{equation}
\beta_j=(\Delta\text{primal})_j-\sum_{i\in N(j)}\Delta \alpha_i
\end{equation}
in which $\Delta \alpha_i$ is the change in $\alpha_i$ and $(\Delta \text{primal})_j$ is the change in the primal objective value during processing $j$. The update rule of $\alpha_i$ depends on the setting, and usually is relying on a function $g:{[0,1]\rightarrow [0,1]}$, which is non-decreasing and satisfies $g(1)=1$ along with a certain integral equation, e.g.\ for the online bipartite matching we need 
\begin{equation}
\label{inteq1}
\forall t\in [0,1]:\int_0^{t}g(x)dx +1-g(t) =F
\end{equation} 
 Finally, in some problems it might be advantageous to define some \textit{auxiliary} random variables to facilitate the dual construction. These random variables help us to keep track of the dual variables $\{\alpha_i\}$.

In subsequent sections, we see  instances of this framework in detail, as mentioned before.

\subsection{Dual fitting analysis for the worst-case model}
We first consider the worst case model, and analyze the water-filling algorithm, introduced first in~\citep{KP}, using our framework. 
\subsubsection{Water-filling algorithm} The algorithm is very simple. For each arrived vertex $j$, while $\{\exists$ neighbor $i\in N(j)$ with ${y_i}<1\}$ and $\{\sum_{i\in N(j)}{x_{i,j}}<1\}$ the algorithm allocates vertex $j$ \textit{continuously} among the neighbors with the minimum water-level, so as to increase their water-levels equally. We now have a theorem on the competitive ratio of water-filling algorithm\footnote{Following the upper bound of $\frac{e-1}{e}$ for the worst-case competitive ratio of online fractional matching \citep{KVV}, this bound is tight.}.
\begin{theorem}\label{th-worstcase-water}\citep{KP}
In the fractional online bipartite matching problem, the water-filling algorithm achieves the competitive ratio of $\frac{e-1}{e}$ under the worst-case model.
\end{theorem}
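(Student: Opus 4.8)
I would prove Theorem~\ref{th-worstcase-water} by instantiating the dual-fitting framework of Section~\ref{sec-framework}. Differentiating the integral equation~\eqref{inteq1} shows that its unique non-decreasing solution with $g(1)=1$ is $g(y)=e^{y-1}$, and evaluating~\eqref{inteq1} at $t=1$ gives $F=\int_0^1 g(y)\,dy=1-1/e=\tfrac{e-1}{e}$. Running the water-filling algorithm, I maintain dual variables as follows: whenever the algorithm raises the level of a bucket $i$ I raise $\alpha_i$ at rate $g(y_i)$, so that at all times $\alpha_i=\int_0^{y_i}g(y)\,dy$ (hence $\alpha_i$ starts at $0$ and is non-decreasing); and when the processing of a vertex $j\in R$ is complete I set $\beta_j=(\Delta\text{primal})_j-\sum_{i\in N(j)}\Delta\alpha_i$ as in the framework, where the deltas are the changes incurred while processing $j$. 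Substituting the rate rule, this is $\beta_j=\sum_{i\in N(j)}\int_{\yiold}^{\yinew}\bigl(1-g(y)\bigr)\,dy$, which is $\ge 0$ because $0\le g\le 1$ on $[0,1]$.

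The next step is to check Property~(1): since $\beta_j$ is defined by exactly the framework rule and each $\alpha_i$ starts at $0$, the terms $\sum_{i\in N(j)}\Delta\alpha_i$ sum over all $j$ to $\sum_i\alpha_i$, so the dual objective $\sum_i\alpha_i+\sum_j\beta_j$ equals $\sum_j(\Delta\text{primal})_j=\sum_i y_i$, the primal value. The heart of the argument is Property~(2), namely $\alpha_i+\beta_j\ge F$ for every edge $(i,j)$ — and since the algorithm is deterministic, I would establish this deterministically rather than merely in expectation. Fix $(i,j)$ and look at the moment the algorithm finishes processing $j$; by the stopping condition, one of two things holds. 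Either (i) every neighbour of $j$ — in particular $i$ — has water level $1$, in which case $\alpha_i\ge\int_0^1 g(y)\,dy=F$ and $\beta_j\ge 0$ finishes it; or (ii) $j$ allocated a full unit, $(\Delta\text{primal})_j=1$, and the buckets that received water from $j$ all end at a common level $L_j\le 1$. In case (ii) every neighbour of $j$ ends with $y_i\ge L_j$, so $\alpha_i\ge\int_0^{L_j}g(y)\,dy$; and since $1-g$ is non-increasing, the terms of $\beta_j$ coming from the raised buckets are at least $(1-g(L_j))\sum_i(L_j-\yiold)=1-g(L_j)$, the sum being over the raised buckets, whose total allocation is $1$. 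Adding these and applying~\eqref{inteq1} at $t=L_j$ gives $\alpha_i+\beta_j\ge\int_0^{L_j}g(y)\,dy+1-g(L_j)=F$.

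Having verified both properties, the conclusion is immediate: $(\alpha/F,\beta/F)$ is feasible for the dual LP in~\eqref{lp-matching} with objective $\tfrac1F$ times the algorithm's matching value, so weak LP duality yields $\textbf{OPT}\le\tfrac1F\,\textbf{ALG}$, i.e.\ competitive ratio at least $F=\tfrac{e-1}{e}$. I expect the only genuinely delicate point to be the bookkeeping in case~(ii): one has to argue from the water-filling dynamics that the buckets receiving water from $j$ are precisely those whose level ends at the common threshold $L_j$ and that the mass they receive is exactly $\sum_i(L_j-\yiold)=1$, so that the pointwise bound $1-g(y)\ge 1-g(L_j)$ for $y\le L_j$ can be integrated against total mass $1$; the degenerate situations (a neighbour already saturated at level $1$, or $L_j=1$) are absorbed into case~(i).
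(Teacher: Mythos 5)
Your proof is correct, and it takes a mildly but genuinely different route from the paper's. The paper runs a \emph{randomized} dual fitting: it draws thresholds $U_i\sim\mathrm{unif}[0,1]$, sets $\alpha_i=g(U_i)\mathds{1}(U_i\le y_{i,\text{new}})$, and proves dominance ($\mathbb{E}\{\alpha_i\}\ge\int_0^{Y^c}g$) and monotonicity ($\mathbb{E}\{\beta_j\}\ge 1-g(Y^c)$) \emph{in expectation}, with $Y^c$ the level of $i$ when $j$'s processing ends. You instead take the deterministic dual $\alpha_i=\int_0^{y_i}g(y)\,dy$ and $\beta_j=\sum_{i'\in N(j)}\int_{y_{i',\text{old}}}^{y_{i',\text{new}}}(1-g)$, which is exactly the expectation of the paper's randomized dual, and you prove the edge constraint $\alpha_i+\beta_j\ge F$ pointwise; your two cases (all neighbours of $j$ saturated vs.\ $j$ fully matched at common level $L_j$) mirror the two cases in the paper's monotonicity lemma, and the identification $g(y)=e^{y-1}$, $F=1-1/e$ is the same. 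Since water-filling is deterministic, nothing is lost by de-randomizing the dual, so your argument is a clean self-contained proof of the theorem. What the paper's extra layer of randomness buys is stated in its remark: the \emph{same} update rule (with the $U_i$'s as ranks) simultaneously reproduces the Devanur--Jain--Kleinberg analysis of RANKING for integral matching, where the randomness is intrinsic to the algorithm and cannot be integrated out. Your version is closer in spirit to the classical Buchbinder--Jain--Naor primal-dual analysis of water-filling, but recast in the dual-fitting form (exact primal--dual value equality, then rescaling by $F$) rather than as online dual maintenance; both are valid, and yours is arguably the simplest possible proof of Theorem~\ref{th-worstcase-water} in isolation, at the cost of not extending verbatim to RANKING.

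One small bookkeeping point you flagged yourself is worth making explicit: in case (ii) the set $L(j)$ of buckets that received water does all terminate at a common level $L_j$, because water-filling always raises the current minimum among unsaturated neighbours, so once two buckets are simultaneously receiving water they stay tied; a bucket can only leave the active set by reaching level $1$, which would force $L_j=1$ and collapse into your case~(i). With that observed, your bound $\beta_j\ge(1-g(L_j))\sum_{i'\in L(j)}(L_j-y_{i',\text{old}})=(1-g(L_j))\cdot 1$ is airtight.
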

\subsubsection{Proof of theorem~\ref{th-worstcase-water} by dual fitting}
\begin{proof}
Consider the primal-dual linear programs in (\ref{lp-matching}), and let $\{x_{i,j}\}$ be the primal algorithm's allocation. Prior to the running of the algorithm, generate i.i.d. random variables $\{U_i\}\sim\text{unif}[0,1]$ for each $i\in L$. We now construct a randomized dual solution, as prescribed by the dual fitting framework. For each vertex $i\in L$,  initialize $\alpha_i=\beta_j=0,\forall i,j $. Once vertex $j\in R$ arrives,  let $y_{i,\text{old}}$ and $y_{i,\text{new}}$ denote the water level of $i\in N(j)$ before and after processing the arrived vertex by the primal algorithm. Then update the dual variables by setting
\begin{equation}
\alpha_i=\left\{ \begin{array}{ll}
g(U_i) &\mbox{ if $U_i\leq y_{i,\text{new}}$} \\
0 &\mbox{ if $U_i > y_{i,\text{new}}$}
\end{array} \right.,~ \forall i \in N(j),
~~~~\beta_j=(\Delta \text{primal})_j-\sum_{i\in N(j)}{\Delta \alpha_i}=\sum_{i\in N(j)}({x_{i,j}}-\Delta \alpha_i)
\end{equation}
in which, $g$ is a non-decreasing function satisfying $g(1)=1$ (we use this property later in the proof) and $\Delta \alpha_i$ is the change in $\alpha_i$ during processing $j$. Let $(\Delta\text{dual})_j$ and  $(\Delta\text{primal})_j$ denote the change in the dual and primal objective during the time algorithm processes $j$. Clearly we have $(\Delta\text{dual})_j=\sum_{i\in N(j)}{x_{i,j}}=(\Delta\text{primal})_j$, and so the first property holds. To show the second property, as the first step notice that $\mathbb{E}\{\alpha_i\}\geq 0, \forall i\in L$. As we show later, $\mathbb{E}\{\beta_j\}\geq 0, j\in R$. Now,  fix an edge $(i,j)\in E$. To proceed to the next step, we need the following definition.
\begin{definition}
For a fixed edge $(i,j)\in E$,  the \textit{critical water-level}, denoted by $Y^c$, is  the final water-level of $i$ after the algorithm finishes processing $j$. 
\end{definition}
Let $L(j)$ denotes the subset of $N(j)$ whose water level has been increased during processing $j$.  We now prove the following lemmas. (The names of the lemmas
are chosen to match the analogous lemmas from the primal-dual analysis of the RANKING algorithm in~\citep{DJK}.)
\begin{lemma}[Dominance Property]
\label{dom-matching} For a fixed edge $(i,j)\in E$, $\mathbb{E}\{\alpha_i\}\geq \int_{0}^{Y^c} g(u) du$.
\end{lemma}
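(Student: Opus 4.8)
The plan is to exploit the fact that, in the worst-case model, the water-filling algorithm is deterministic: the variables $\{U_i\}$ are sampled only to construct the dual solution, not to run the primal algorithm. Consequently, for a fixed input the entire trajectory of water levels is non-random; in particular the critical water-level $Y^c$ and the water level of $i$ at the moment the algorithm halts, which I will call $y_i^{\star}$, are both constants (depending only on the fixed adversarial input). Moreover the water-filling process never decreases a water level, so $y_i^{\star}\ge Y^c$.

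The next step is to pin down the terminal value of $\alpha_i$. Each time a neighbor of $i$ is processed, the construction re-assigns $\alpha_i$ to $g(U_i)$ if $U_i$ is at most the post-processing water level of $i$ and to $0$ otherwise, and after the last neighbor of $i$ has been processed $\alpha_i$ is never modified again. Since the post-processing water level at that final reassignment is exactly $y_i^{\star}$, the value of $\alpha_i$ when the algorithm terminates equals $g(U_i)$ if $U_i\le y_i^{\star}$ and $0$ otherwise. (If $i$ has no neighbors then $\alpha_i=0$ and $y_i^{\star}=0$, and the identity still holds almost surely.)

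Finally, I would combine these two observations. Because $g$ takes values in $[0,1]$, it is in particular non-negative, so from $y_i^{\star}\ge Y^c$ we obtain the pointwise bound $\alpha_i = g(U_i)\,\mathbf{1}[\,U_i\le y_i^{\star}\,]\ \ge\ g(U_i)\,\mathbf{1}[\,U_i\le Y^c\,]$. Taking expectations over $U_i\sim\mathrm{unif}[0,1]$, and using that $Y^c$ is a fixed number (hence trivially independent of $U_i$), yields $\mathbb{E}\{\alpha_i\}\ \ge\ \mathbb{E}\{g(U_i)\,\mathbf{1}[U_i\le Y^c]\}\ =\ \int_0^{Y^c} g(u)\,du$, as claimed.

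I do not anticipate a serious obstacle here; the argument is essentially bookkeeping. The one point that requires a little care is that the lemma speaks about the terminal dual value $\alpha_i$, whereas $Y^c$ refers to the water level of $i$ only immediately after $j$ is processed, so one must remember to invoke monotonicity of the water levels (together with $g\ge 0$) to bridge that gap rather than stopping the analysis at the instant $j$ is processed. It is also worth noting explicitly that the determinism of water-filling is what makes the final expectation step immediate; in the random-order-arrival version of this lemma proved later the corresponding step is genuinely delicate, but in the worst-case model it is not.
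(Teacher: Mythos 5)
Your proof is correct and takes essentially the same route as the paper: both arguments rest on the observation that the terminal value of $\alpha_i$ equals $g(U_i)\mathbf{1}[U_i \le \theta]$ where $\theta$ is the final water level of $i$, so that $\mathbb{E}\{\alpha_i\} = \int_0^\theta g(u)\,du$, and then use monotonicity of water levels (giving $\theta \ge Y^c$) together with $g \ge 0$ to conclude. Your write-up merely makes explicit the determinism of the primal trajectory and the pointwise indicator comparison that the paper leaves implicit.
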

\begin{proof}
From the definition of $\alpha_i$, if the final water-level of $i$ after termination is $\theta$, then $\mathbb{E}\{\alpha_i\}=\int_{0}^{\theta} g(u) du\geq \int_{0}^{Y^c} g(u) du$.\qed
\end{proof}
\begin{lemma}[Monotonicity Property] For a fixed edge $(i,j)\in E$,
$\mathbb{E}\{\beta_j\} \geq 1-g(Y^c).$
\end{lemma}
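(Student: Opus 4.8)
The plan is to evaluate $\mathbb{E}\{\beta_j\}$ exactly and then bound it from below using the monotonicity of $g$ together with the geometry of a single water-filling step. Since the water-filling algorithm is deterministic, all primal quantities --- the allocations $x_{i',j}$, the water levels $y_{i',\text{old}},y_{i',\text{new}}$ for $i'\in N(j)$, and in particular $Y^c$ --- are fixed numbers; the only randomness lies in the $\{U_{i'}\}$. From the update rule and the fact that water levels are non-decreasing over time, the value of $\alpha_{i'}$ just before processing $j$ is $g(U_{i'})\,\mathds{1}[U_{i'}\le y_{i',\text{old}}]$ and just after is $g(U_{i'})\,\mathds{1}[U_{i'}\le y_{i',\text{new}}]$, so $\Delta\alpha_{i'}=g(U_{i'})\,\mathds{1}[y_{i',\text{old}}<U_{i'}\le y_{i',\text{new}}]$ and hence $\mathbb{E}\{\Delta\alpha_{i'}\}=\int_{y_{i',\text{old}}}^{y_{i',\text{new}}}g(u)\,du$. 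Since $x_{i',j}=y_{i',\text{new}}-y_{i',\text{old}}$ (both sides being $0$ when $i'\notin L(j)$), summing over $N(j)$ gives
\[
\mathbb{E}\{\beta_j\}=\sum_{i'\in N(j)}\bigl(x_{i',j}-\mathbb{E}\{\Delta\alpha_{i'}\}\bigr)=\sum_{i'\in L(j)}\int_{y_{i',\text{old}}}^{y_{i',\text{new}}}\bigl(1-g(u)\bigr)\,du .
\]
As $g$ is non-decreasing with $g(1)=1$ the integrand is non-negative, which in passing establishes the deferred claim $\mathbb{E}\{\beta_j\}\ge 0$.

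It then remains to compare this quantity with $1-g(Y^c)$, and here the structure of water-filling enters. I would argue the following dichotomy for the step that processes $j$: either the active (minimum) water level reaches $1$ at some point --- in which case every non-full neighbour of $j$ is simultaneously brought to $1$, the loop terminates, and $Y^c=1$, so $1-g(Y^c)=0\le\mathbb{E}\{\beta_j\}$ and we are done --- or it does not, in which case $j$ is fully allocated ($\sum_{i'}x_{i',j}=1$) and there is a single threshold $H<1$, the final common level, with $y_{i',\text{new}}=\max(y_{i',\text{old}},H)$ for all $i'\in N(j)$. In the latter case every $i'\in L(j)$ has $y_{i',\text{new}}=H$ and $y_{i',\text{old}}\le H$, so using $1-g(u)\ge 1-g(H)$ on $[y_{i',\text{old}},H]$,
\[
\mathbb{E}\{\beta_j\}\ \ge\ (1-g(H))\sum_{i'\in L(j)}x_{i',j}\ =\ 1-g(H).
\]
Finally $Y^c\ge H$: if $i\in L(j)$ then $Y^c=H$, while if $i\notin L(j)$ then $i$ was not raised, forcing $y_{i,\text{old}}\ge H$ by the min-level rule, so $Y^c=y_{i,\text{old}}\ge H$. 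Monotonicity of $g$ then gives $g(Y^c)\ge g(H)$, hence $\mathbb{E}\{\beta_j\}\ge 1-g(H)\ge 1-g(Y^c)$.

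The bulk of the argument is the exact computation of $\mathbb{E}\{\beta_j\}$, which is pure bookkeeping, followed by a one-line estimate that trades the integral of $1-g$ for its value at the endpoint. The only genuinely delicate point is the dichotomy and the inequality $Y^c\ge H$: one must verify that if any neighbour of $j$ gets saturated during the step then in fact all non-full neighbours do (so that the critical level is exactly $1$), and otherwise that the raised neighbours all end at a common level that is at most $Y^c$. This is where the specific ``always pour into the lowest bucket'' behaviour of water-filling is used, and it is the main (though still modest) obstacle.
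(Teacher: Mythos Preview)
Your proof is correct and follows essentially the same approach as the paper: compute $\mathbb{E}\{\beta_j\}=\sum_{i'\in L(j)}\int_{y_{i',\text{old}}}^{y_{i',\text{new}}}(1-g(u))\,du$, bound the integrand by $1-g$ of the endpoint, and then split into the two cases (level reaches $1$, so $Y^c=1$; or $j$ fully allocated, so $\sum_{i'}x_{i',j}=1$). The only cosmetic difference is that the paper uses $y_{i',\text{new}}\le y_{i,\text{new}}=Y^c$ directly for $i'\in L(j)$, whereas you introduce the common final level $H$ and argue $Y^c\ge H$; these are the same observation.
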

\begin{proof}
We have 
\begin{align}
\mathbb{E}\{\beta_j\}&= \sum_{i'\in N(j)}{x_{i',j}}-\sum_{i'\in L(j)}{\mathbb{E}\{\Delta \alpha_{i'}\}}= \sum_{i'\in L(j)}{x_{i',j}}-\sum_{i'\in L(j)}{\int_{y_{i',\text{old}}}^{y_{i',\text{new}}}g(u)du}\nonumber\\
&\overset{(1)}{\geq} \sum_{i'\in L(j)}{x_{i',j}}-\sum_{i'\in L(j)}{g(y_{i',\text{new}})(y_{i',\text{new}}-y_{i',\text{old}})}\nonumber\\
&\overset{(2)}{=}\sum_{i\in L(j)}{x_{i',j}}(1-g(y_{i',\text{new}}))\overset{(3)}{\geq} (1-g(Y^c))\sum_{i'\in L(j)}{x_{i',j}}~(\geq 0)\label{worst-mono}
\end{align}
where $(1)$ is true as $g$ is non-decreasing,  $(2)$ holds because $y_{i',\text{new}}-y_{i',\text{old}}=x_{i',j}$, and $(3)$ is true because in the water-filling algorithm $\forall i'\in L(j), y_{i',\text{new}}\leq y_{i,\text{new}}=Y^c$. There are two cases to consider. The first case is when upon the completion of processing  $j$, $\sum_{i'\in N(j)}x_{i',j}< 1$, which implies $Y^c=1$. The fact that $\mathbb{E}\{\beta_j\} \geq 1 - g(Y^c)$ is now immediate, as $\mathbb{E}\{\beta_j\} \geq 0=1-g(1)$. (Recall that $g(1)=1$.) The second case is when $\sum_{i'\in N(j)}x_{i',j}=1$, in which case the proof is again immediate following (\ref{worst-mono}). \qed
\end{proof}
The monotonicity property implies also $\mathbb{E}\{\beta_j\} \geq 0$ as a side result. Now, for any $g$ (non-decreasing, $g(1)=1$) and $F$ that satisfies (\ref{inteq1}), the fact that the dual becomes feasible in expectation upon dividing by $F$ is proved by combining the above two lemmas: $$\mathbb{E}\{\alpha_i+\beta_j\}\geq \int_{0}^{Y^c} g(u) du+1-g(Y^c)=F.$$ The proof of Theorem~\ref{th-worstcase-water} then follows by setting $g(x)=e^{x-1}$ and $F=\int_0^{1}g(x)dx=\frac{e-1}{e}.$ 
\qed 
\end{proof}
\begin{remark}
Surprisingly, the proof of Theorem~\ref{th-worstcase-water} also derives an analysis of RANKING for the integral online bipartite matching problem. In fact, an interested reader can check that for this special case, in which $y_i\in\{0,1\}$, our update rule boils down to the update rule of dual variables in \citep{DJK}, and the rest of the proof follows the same steps as in the proof of Theorem~4 in \citep{DJK}. Moreover, the same proof technique can be applied to the vertex weighted matching problem to re-derive the proof of \citep{DJK} for the result of \citep{AGKM11}. 
\end{remark}
\subsubsection{Water-filling in bounded degree online bipartite matching}
In the appendix, we provide an improved result for the competitive ratio of  water-filling when the underlying graph has bounded degree $d$ on the online side. The new analysis is basically very similar to our dual-fitting framework, with some steps improved due to the bounded-degree condition. We have the following result whose proof is deferred for space reasons.
\begin{theorem}
\label{th4}
The water-filling algorithm achieves a competitive ratio of $\frac{e-1}{e}(1+c/d)$ for some constant $c>0$ in the $d-$bounded degree fractional online bipartite matching problem.
\end{theorem}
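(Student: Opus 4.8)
The plan is to re‑run the dual‑fitting construction from the proof of Theorem~\ref{th-worstcase-water} almost verbatim — the same i.i.d.\ uniform variables $\{U_i\}$, the same rule setting $\alpha_i$ to $g(U_i)$ once $\yinew\ge U_i$ and to $0$ otherwise, and $\beta_j=\sum_{i\in N(j)}(x_{i,j}-\Delta\alpha_i)$, so that Property~(1) is automatic — but with two modifications that cash in the degree bound: a sharpened Monotonicity Property and a more carefully tuned function $g$. Fix an edge $(i,j)\in E$ with critical water‑level $Y^c$; the goal is to show $\mathbb{E}\{\alpha_i\}+\mathbb{E}\{\beta_j\}\ge F$ for $F=\tfrac{e-1}{e}\,(1+c/d)$, after which dividing by $F$ yields the theorem exactly as in the framework of Section~\ref{sec-framework}.

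The first new ingredient is a geometric lemma: \emph{in the water‑filling algorithm $Y^c\ge 1/d$ always}. If $Y^c=1$ this is trivial. Otherwise, while processing $j$ the inner loop did not stop because every neighbor reached level $1$, so it stopped because $j$ had poured exactly one unit of water; and since the ``active'' buckets always rise together, the common level $\ell_j\le Y^c$ staying below $1$ forces every $i'\in L(j)$ to finish at level $\ell_j$, hence to have received $x_{i',j}=\ell_j-y_{i',\text{old}}\le\ell_j$. As $|L(j)|\le d$, summing over $i'\in L(j)$ gives $1=\sum_{i'\in L(j)}x_{i',j}\le d\,\ell_j\le d\,Y^c$.

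The second ingredient sharpens Monotonicity. Writing $w_{i'}=x_{i',j}$ and $\Phi(w)=\int_{Y^c-w}^{Y^c}g(u)\,du$, we still have $\mathbb{E}\{\beta_j\}=\sum_{i'\in L(j)}\big(w_{i'}-\Phi(w_{i'})\big)$. Since $\Phi$ is concave, increasing, and $\Phi(0)=0$ (its derivative $g(Y^c-w)$ is non‑increasing), the ratio $s\mapsto\Phi(s)/s$ is non‑increasing, and together with $|L(j)|\le d$ this upgrades Jensen's inequality to $\sum_{i'}\Phi(w_{i'})\le d\,\Phi\big(\tfrac1d\sum_{i'}w_{i'}\big)$. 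When $Y^c<1$ we have $\sum_{i'}w_{i'}=1$, so
\[
\mathbb{E}\{\beta_j\}\ \ge\ 1-d\!\int_{Y^c-1/d}^{Y^c}\! g(u)\,du ,
\]
an integral that makes sense \emph{precisely because} $Y^c\ge 1/d$; since $g$ is non‑decreasing this strictly beats the old bound $\mathbb{E}\{\beta_j\}\ge 1-g(Y^c)$. Combined with the unchanged Dominance Property $\mathbb{E}\{\alpha_i\}\ge\int_0^{Y^c}g$, the whole analysis reduces to two analytic requirements on a non‑decreasing $g$ with $g(1)=1$, $g\le 1$ (replacing the integral equation~(\ref{inteq1})): \textbf{(A)} $\int_0^1 g\ge F$, which covers the degenerate edges on which $i$ has already reached level $1$ while $j$ pours less than a full unit (so that only the trivial bound $\beta_j\ge 0$ is available); and \textbf{(B)} $\int_0^t g+1-d\int_{t-1/d}^t g\ \ge\ F$ for every $t\in[1/d,1]$, which covers the case $Y^c\le 1$ via the two lemmas above.

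Finally I would instantiate (A) and (B) with the clipped, slightly amplified exponential $g(x)=\min\{(1+\varepsilon)e^{x-1},\,1\}$, taking $\varepsilon$ to be a small constant times $1/d$ and $c$ a small absolute constant. For the bare exponential $e^{x-1}$ the left side of (B) equals $1-e^{-1}+e^{t-1}\big(1-d(1-e^{-1/d})\big)$, whose bracket is positive because $1-e^{-1/d}<1/d$, so it is increasing in $t$ and minimized at $t=1/d$, where expanding $e^{1/d}$ and $e^{-1/d}$ shows it exceeds $1-1/e$ by $\tfrac{1}{2ed}+O(1/d^2)$; amplifying and clipping by $1+\varepsilon$ costs an $O(\varepsilon)$ additive loss in (B) but buys a $\Theta(\varepsilon)$ additive gain in (A), so the right $(\varepsilon,c)$ satisfies both with room to spare. \textbf{The main obstacle is exactly this balancing}: the unmodified exponential delivers the required $\Theta(1/d)$ slack in (B) but has \emph{zero} slack in (A), whereas over‑amplifying $g$ restores (A) while wiping out the gain in (B), so one must check, with honest bounds on the lower‑order terms of the relevant Taylor expansions (and a separate, trivial treatment of small $d$, where the claimed ratio lies below the true one anyway), that a single pair $(\varepsilon,c)$ works for all $d$. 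Everything else — Property~(1), $\mathbb{E}\{\beta_j\}\ge 0$, and the routine split on whether $Y^c<1$ or $Y^c=1$ — goes through as in the proof of Theorem~\ref{th-worstcase-water}.
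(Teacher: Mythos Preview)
Your plan is sound and takes a genuinely different route from the paper's. The paper's appendix proof actually \emph{leaves} the randomized dual-fitting framework for this one theorem: it builds a deterministic dual $\alpha_i=G(y_{i,\text{new}})/F$, $\beta_j=1-G(\ell_j)/F$ with $G(t)=\int_0^t g$, keeps $g(x)=e^{x-1}$ unchanged, and extracts the $\Theta(1/d)$ gain from the \emph{second-order} Taylor term in $G$, bounded via Cauchy--Schwarz $\sum_{i'}x_{i',j}^{2}\ge\tfrac{1}{|L(j)|}\bigl(\sum_{i'}x_{i',j}\bigr)^{2}\ge\tfrac{1}{d}(\Delta\text{primal})_j^{2}$, followed by a \emph{global} charging step (partition $R$ by whether $(\Delta\text{primal})_j\ge\tfrac12$; the small-increment vertices can only pour into buckets already at level $\ge\tfrac12$, which forces $\sum_j(\Delta\text{primal})_j^{2}\ge\tfrac14\,\text{primal}$). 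You instead stay inside the randomized framework, replace the Taylor/Cauchy--Schwarz step by a Jensen bound on the concave $\Phi$, and absorb the resulting slack by perturbing $g$ rather than by global accounting. Your route is arguably closer to the paper's own unifying theme; the paper's route is cleaner in that $g$ never changes and no balancing of your conditions (A) and (B) is needed. Two small cleanups: the upper endpoint in your $\Phi$ should be the common new level $\ell_j$ of $L(j)$ rather than $Y^c$ (they coincide only when $i\in L(j)$), but since $\ell_j\le Y^c$ and $g$ is non-decreasing your displayed bound on $\mathbb{E}\{\beta_j\}$ survives; and the real dichotomy is $\sum_{i'}x_{i',j}=1$ versus $<1$ rather than $Y^c<1$ versus $Y^c=1$, though again nothing breaks.
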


\subsection{Dual fitting analysis for the random order arrival model }
\label{secroa}
In this section, we first define a broad class of fractional matching algorithms called \textit{allocation-monotone greedy} algorithms, and then we analyze their competitive ratio under random order arrival using our dual fitting framework. The integral greedy algorithm with a fixed ordering on the offline side is a special case of our broad class. So, the result of this section also derives the result of \citep{GM08,KVV} as a special case.

We focus on algorithms satisfying two natural properties formalized by
the following two definitions.
\begin{definition} \label{def:greedy:onlinematching}
An online fractional matching algorithm is \emph{greedy} if it is not allowed to keep a positive fraction of the arrived vertex unmatched whenever fractional matching is possible.
\end{definition}

To define allocation monotonicity, we need the following notations. Fix the offline side $L$. Let an input sequence $\textbf{seq}$ be an ordered set of online vertices. (The data associated to each element of the ordered set encodes its set of offline neighbors as well.) For any online fractional matching algorithm, let $\mathbf{y}_{\textbf{old}}(\textbf{seq})$ and $\mathbf{y}_{\textbf{new}}(\textbf{seq})$ denote the vector of water-levels of vertices in $L$ when running the algorithm on the input sequence $[\textbf{seq},j]$, before and after algorithm processes $j$. Here, the notation $[\textbf{seq},j]$ denotes the concatenation of $\textbf{seq}$ with the single-element sequence $(j)$.
\begin{definition} 
\label{defall-onlinematching}
An online fractional matching algorithm is \emph{allocation-monotone} if for any two input sequences $[\textbf{seq}_1,j]$ and $[\textbf{seq}_2,j]$ such that $\mathbf{y}_{\textbf{old}}(\textbf{seq}_1)\preceq\mathbf{y}_{\textbf{old}}(\textbf{seq}_2)$, then $\mathbf{y}_{\textbf{new}}(\textbf{seq}_1)\preceq\mathbf{y}_{\textbf{new}}(\textbf{seq}_2)$. (Here, $\preceq$ denotes the element-wise vector comparison.)
\end{definition}
 
 Now, we have the following lower bound on the competitive ratio of an \textit{arbitrary} allocation-monotone greedy algorithm under the random order arrival model in the online bipartite matching problem.
\begin{theorem}
\label{th-roa-onlinematching}
In the online bipartite matching problem, any allocation-monotone greedy algorithm achieves a competitive ratio of at least $\frac{e-1}{e}$ under the random order arrival model.
\end{theorem}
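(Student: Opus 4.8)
The plan is to mirror the worst-case dual-fitting analysis from Theorem~\ref{th-worstcase-water}, but account for the random permutation of the online side. As before, for each $i\in L$ draw $U_i\sim\text{unif}[0,1]$ i.i.d., run the (now greedy, allocation-monotone) algorithm on the randomly permuted input, and construct the same randomized dual: upon arrival of $j$, set $\alpha_i=g(U_i)\,\mathds{1}[U_i\le y_{i,\text{new}}]$ for $i\in N(j)$ and $\beta_j=(\Delta\text{primal})_j-\sum_{i\in N(j)}\Delta\alpha_i$. Property~(1) holds exactly as before, so the whole game is to prove Property~(2): that $\mathbb{E}\{\alpha_i+\beta_j\}\ge F=\frac{e-1}{e}$ for every edge $(i,j)\in E$, where now the expectation is over both $\{U_i\}$ \emph{and} the random arrival order. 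The monotonicity lemma carries over essentially verbatim: on any fixed permutation, the same computation gives $\mathbb{E}_U\{\beta_j\}\ge (1-g(Y^c))\sum_{i'\in L(j)}x_{i',j}$ where $Y^c$ is the critical water-level of $i$ on that permutation; and if $j$ is left with spare mass then $Y^c=1$ and $\beta_j\ge 0 = 1-g(1)$. So after taking expectation over permutations we get $\mathbb{E}\{\beta_j\}\ge \mathbb{E}\{1-g(Y^c)\} = 1-\mathbb{E}\{g(Y^c)\}$ (using that on the "spare mass" event we may as well set $Y^c=1$), where $Y^c$ is now a random variable.

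The crux — and the step I expect to be the main obstacle — is upgrading the Dominance Lemma. In the worst case we used $\mathbb{E}_U\{\alpha_i\}=\int_0^\theta g$ where $\theta$ is $i$'s \emph{final} water-level, and bounded $\theta\ge Y^c$ trivially. Under random arrivals that crude bound is too weak; instead I would argue that $i$'s final water-level stochastically dominates $Y^c$ in a quantitative way. The key is a coupling/exchangeability argument in the spirit of~\citep{GM08,DJK}: condition on the unordered multiset of online vertices and on $j$ landing in a particular position; then consider the prefix of the permutation before $j$. Because the algorithm is allocation-monotone (Definition~\ref{defall-onlinematching}), the water-level of $i$ just before $j$ is processed is monotone in the prefix, and because it is greedy, $j$ must push $i$'s level up to at least the minimum level among $j$'s neighbors that $j$ touches. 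The standard random-order trick is that a uniformly random position of $j$ makes the prefix a uniformly random subset, so $Y^c$ — the level $i$ reaches \emph{at the moment $j$ arrives} — is distributed like the level $i$ would have after seeing a $\text{Binomial}$-like random fraction of the instance; meanwhile $i$'s final level sees the whole instance. Quantitatively, one shows $\mathbb{E}\{\theta - \text{something}\}$ dominates an integral that, combined with the $\beta_j$ bound, reconstructs $\int_0^{Y^c}g + (1-g(Y^c))$ in expectation.

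Concretely, I would prove: for every edge $(i,j)$, $\mathbb{E}\{\alpha_i\}\ge \mathbb{E}\!\left\{\int_0^{Y^c} g(u)\,du\right\}$ still holds pointwise (it does, since final level $\ge Y^c$ always), but additionally use the random-order structure to show the stronger $\mathbb{E}\{\alpha_i\} + \mathbb{E}\{\beta_j\}\ge F$ by analyzing the distribution of $Y^c$. Writing $h(\theta)=\int_0^\theta g(u)\,du + 1 - g(\theta)$, the integral equation~\eqref{inteq1} says $h\equiv F$ on $[0,1]$; so if we can show $\mathbb{E}\{\alpha_i\}\ge \mathbb{E}\{\int_0^{Y^c}g\}$ and $\mathbb{E}\{\beta_j\}\ge \mathbb{E}\{1-g(Y^c)\}$ with the \emph{same} random $Y^c$ on each sample path, we are done by linearity: $\mathbb{E}\{\alpha_i+\beta_j\}\ge \mathbb{E}\{h(Y^c)\}=F$. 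Thus the real content is just the per-sample-path Dominance bound $\alpha_i \ge \int_0^{Y^c}g(u)\,du$ in expectation-over-$U_i$ conditioned on the permutation, which follows because $i$'s final level is at least $Y^c$; and then the permutation-expectation is free. So in fact the random-order case reduces cleanly to the worst-case lemmas applied path-by-path, provided one checks that on the spare-mass event the identity $h(1)=F$ saves us, exactly as in the proof of Theorem~\ref{th-worstcase-water}. I would conclude by setting $g(x)=e^{x-1}$, $F=\frac{e-1}{e}$, and remarking that the integral greedy matching algorithm with any fixed tie-breaking is allocation-monotone and greedy, recovering~\citep{GM08,KVV}.
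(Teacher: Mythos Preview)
Your claim that the Monotonicity Lemma ``carries over essentially verbatim'' is false, and this is a genuine gap that collapses the argument. In the worst-case proof, step~(3) of inequality~(\ref{worst-mono}) uses that $y_{i',\text{new}}\le Y^c$ for every $i'\in L(j)$; this is a property of \emph{water-filling} (which always raises the minimum level), not of arbitrary greedy allocation-monotone algorithms. Concretely: suppose $j$ has two neighbors $i,i'$, both initially unmatched, and integral greedy (with $i'$ preferred) assigns $j$ entirely to $i'$. Then for the edge $(i,j)$ we have $Y^c=y_{i,\text{new}}=0$, so you need $\mathbb{E}_U\{\beta_j\}\ge 1-g(0)=1-1/e$. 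But $\beta_j=1-\Delta\alpha_{i'}=1-g(U_{i'})$, so $\mathbb{E}_U\{\beta_j\}=1-\int_0^1 g=1/e<1-1/e$. Thus your per-sample-path bound $\mathbb{E}_U\{\beta_j\}\ge 1-g(Y^c)$ fails, and averaging over permutations cannot repair it. Note too that your final argument never invokes allocation-monotonicity or greediness at all; had it worked, it would prove $(1-1/e)$-competitiveness for \emph{every} algorithm on \emph{every} fixed permutation, which is absurd.

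The paper's proof is structurally different from what you propose. It abandons the offline-side variables $U_i$ entirely and instead draws $Z_j\sim\text{unif}[0,1]$ on the \emph{online} side, using their sorted order to realize the random permutation. The dual is then $\alpha_i\gets\alpha_i+x_{i,j}(1-g(Z_j))$ and $\beta_j=\big(\sum_{i}x_{i,j}\big)g(Z_j)$, so the roles are swapped: the ``Dominance'' bound is now on $\mathbb{E}\{\beta_j\mid \mathbf{Z}_{-j}\}$ (using greediness to lower-bound $\sum_i x_{i,j}$ via the water-level of $i$ in the instance $G\setminus\{j\}$), and the ``Monotonicity'' bound is on $\alpha_i$. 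It is precisely in this latter bound that allocation-monotonicity is used: one compares the run on $G$ to the run on $G\setminus\{j\}$ and argues, via Definition~\ref{defall-onlinematching} and induction on arrivals, that inserting $j$ can only speed up the filling of $i$, so $\frac{d\alpha_i}{dy_i}\ge 1-g(Z^c(y_i))$ where $Z^c$ is the critical-position step function on $G\setminus\{j\}$. The two bounds are then combined via a telescoping-sum calculation (not simply $h(Y^c)\equiv F$) to get $\mathbb{E}\{\alpha_i+\beta_j\mid\mathbf{Z}_{-j}\}\ge F$. Your middle paragraph was groping toward the right idea (compare to $G\setminus\{j\}$, use monotonicity), but your ``Concretely'' paragraph discarded it in favor of a shortcut that does not work.
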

Before proving Theorem~\ref{th-roa-onlinematching}, we note
the following corollary. 
\begin{corollary}
\label{greedy:integralmatching}
In the integral online bipartite matching problem, an integral greedy matching algorithm with a fixed ordering on the offline side is indeed a special case of an allocation-monotone greedy algorithm, and hence achieves the competitive ratio of $\frac{e-1}{e}$ under the random order arrival model. 
\end{corollary}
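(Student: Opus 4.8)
\subsubsection{Proof plan for Corollary~\ref{greedy:integralmatching}}

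The plan is to exhibit the integral greedy algorithm with a fixed offline ordering as a member of the class covered by Theorem~\ref{th-roa-onlinematching}, so that the bound $\tfrac{e-1}{e}$ transfers for free. Fix a linear order $\sigma$ on $L$ and let $\textsf{Greedy}_\sigma$ be the algorithm that, upon arrival of $j$, sets $x_{i^\ast,j}=1$ for the $\sigma$-highest vertex $i^\ast$ of $N(j)$ that is still unmatched, and leaves $j$ unmatched if every neighbour of $j$ is already matched. This is a deterministic fractional matching algorithm with all $x_{i,j}\in\{0,1\}$; hence every water-level $y_i=\sum_j x_{i,j}$ lies in $\{0,1\}$, with $y_i=1$ exactly when $i$ is matched. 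For an input sequence $\textbf{seq}$ write $S(\textbf{seq})\subseteq L$ for the set of matched vertices after $\textsf{Greedy}_\sigma$ has processed $\textbf{seq}$, so the water-level vector equals the indicator $\mathbf{1}_{S(\textbf{seq})}$ and $S(\textbf{seq}_1)\subseteq S(\textbf{seq}_2)$ is equivalent to $\mathbf{1}_{S(\textbf{seq}_1)}\preceq\mathbf{1}_{S(\textbf{seq}_2)}$. Two things must be checked: that $\textsf{Greedy}_\sigma$ is greedy (Definition~\ref{def:greedy:onlinematching}) and that it is allocation-monotone (Definition~\ref{defall-onlinematching}).

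The greedy property is immediate. When $j$ arrives, either some neighbour $i\in N(j)$ has $y_i=0$, in which case $\textsf{Greedy}_\sigma$ matches $j$ fully, or all neighbours of $j$ have $y_i=1$, in which case no fractional matching of $j$ is possible at all; so a positive fraction of $j$ is never left unmatched while a fractional match is available.

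The substantive step is allocation monotonicity, and this is where I expect the only real subtlety. Fix sequences $[\textbf{seq}_1,j]$ and $[\textbf{seq}_2,j]$ with $\mathbf{y}_{\textbf{old}}(\textbf{seq}_1)\preceq\mathbf{y}_{\textbf{old}}(\textbf{seq}_2)$, i.e.\ $S_1:=S(\textbf{seq}_1)\subseteq S_2:=S(\textbf{seq}_2)$ just before $j$ is processed. Then $\mathbf{y}_{\textbf{new}}(\textbf{seq}_k)=\mathbf{1}_{S_k'}$ where $S_k':=S_k\cup\{i_k^\ast\}$, $i_k^\ast$ being the $\sigma$-maximal element of $N(j)\setminus S_k$ (and $S_k':=S_k$ when $N(j)\setminus S_k=\emptyset$), so it suffices to prove $S_1'\subseteq S_2'$. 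If $N(j)\setminus S_1=\emptyset$ then $S_1'=S_1\subseteq S_2\subseteq S_2'$. Otherwise $i_1^\ast$ exists, and since $S_1\subseteq S_2$ we have $N(j)\setminus S_2\subseteq N(j)\setminus S_1$. If $i_1^\ast\in S_2$, then $S_1'=S_1\cup\{i_1^\ast\}\subseteq S_2\subseteq S_2'$. If $i_1^\ast\notin S_2$, then $i_1^\ast\in N(j)\setminus S_2$; being $\sigma$-maximal in the superset $N(j)\setminus S_1$ and lying in the subset $N(j)\setminus S_2$, it is also $\sigma$-maximal in $N(j)\setminus S_2$, whence $i_2^\ast=i_1^\ast$ and $S_1'=S_1\cup\{i_1^\ast\}\subseteq S_2\cup\{i_1^\ast\}=S_2'$. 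In every case $\mathbf{y}_{\textbf{new}}(\textbf{seq}_1)\preceq\mathbf{y}_{\textbf{new}}(\textbf{seq}_2)$, so $\textsf{Greedy}_\sigma$ is allocation-monotone. The point to be careful about is exactly this monotonicity of the selection rule ``take the $\sigma$-top available neighbour'' under shrinking the available set $N(j)\setminus S$; this is what makes the fixed order essential, and is the only place the structure of the algorithm is used.

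Having verified both properties, Theorem~\ref{th-roa-onlinematching} applies to $\textsf{Greedy}_\sigma$ and gives a competitive ratio of at least $\tfrac{e-1}{e}$ in the random order arrival model. Finally, the value $\sum_{(i,j)\in E}x_{i,j}$ returned by $\textsf{Greedy}_\sigma$ equals the cardinality of the integral matching it produces, and for a bipartite graph the fractional and integral matching optima coincide; hence the competitive ratio guaranteed as a fractional algorithm is precisely the competitive ratio of the integral greedy algorithm against the offline integral optimum, which completes the proof of the corollary. \qed
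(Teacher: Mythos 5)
Your proposal is correct and follows exactly the route the paper intends: exhibit $\textsf{Greedy}_\sigma$ as a greedy, allocation-monotone algorithm and invoke Theorem~\ref{th-roa-onlinematching} (the paper states the corollary without writing out this verification, and your case analysis of the $\sigma$-maximal available neighbour under $S_1\subseteq S_2$ is the right way to fill it in). The closing observation that the fractional and integral optima coincide for bipartite matching, so the dual-fitting bound applies against the integral offline optimum, is also the correct and necessary final step.
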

\begin{remark}
Corollary~\ref{greedy:integralmatching}, which is also proved in  \citep{GM08}, is a simple corollary of the analysis of RANKING in \citep{KVV} together with the observation that RANKING and greedy with random order of arrival are dual to each other~\citep{KVV,MY11}. Interestingly, we can push this result one step further and say that greedy with different fixed orderings on the offline side for each arriving vertex achieves the same competitive ratio, as it is still allocation-monotone, although there is no longer any duality principle relating this family of greedy matching algorithms to the RANKING algorithm.
\end{remark}
\begin{proof}[of Theorem~\ref{th-roa-onlinematching}]
Consider the primal-dual linear programs in (\ref{lp-matching}), and let $\{x_{i,j}\}$ be the primal algorithm's allocation. In order to simulate the uniform random order of arrival of vertices, instead of picking a random total ordering on vertices in $R$, we generate i.i.d. random variables $\{Z_j\} \sim \text{unif} [0,1]$  for every $j$ in $R$ prior to the running of the algorithm, and then assume vertices in $R$ arrive based on the sorted order of $Z_j$'s. (The lower the value of $Z_j$, the earlier $j$ arrives.) We now construct a randomized dual solution as following. Initialize $\alpha_i=\beta_j=0$ for all $i,j$. Once vertex $j\in R$ arrives, for every $i\in N(j)$ let
 \begin{equation} 
 \alpha_i^{\text{new}}=\alpha^{\text{old}}_i+x_{i,j}(1-g(Z_j)),~~~~ \beta_j =(\Delta\text{primal})_j-\sum_{i\in N(j)}\Delta \alpha_i=
\left(\sum_{i\in N(j)}{x_{i,j}} \right)g(Z_j).
 \end{equation} 
Here, $g:{[0,1]\rightarrow[0,1]}$ is a non-decreasing function satisfying $g(1)=1$ (which will be used later in the proof). Obviously, during processing every arrived vertex $j\in R$, 
$$(\Delta{\text{dual}})_j=\beta_j+\sum_{i\in N(j)}{\Delta \alpha_i}=\sum_{i\in N(j)}{x_{i,j}}=(\Delta {\text{primal}})_j$$ 
and hence property (1) is satisfied.

 In order to show that the dual solution satisfies property (2), consider a fixed $(i,j)\in E$. Clearly, $\beta_j\geq 0, \forall j\in R$ and $\alpha_i\geq 0, \forall i\in L$. Now, consider an instance of the algorithm run on the graph $G\backslash \{j\}$ with the same choice of ${Z}_{j'}$ for all vertices in $j'\in L\backslash \{j\}$. We need the following definition.
\begin{definition}
\label{def:critical}
For a fixed $(i,j)\in E$ and fixed values of $\mathbf{Z_{-j}}$, the ``critical position function" of $i$, denoted by $Z^c:{[0,1]\rightarrow \{\mathbf{Z}_{-j}\}}$, is the $Z$-value of the arrived vertex allocating to $i$ at the time that the water level of $i$ reaches the value $y$ while running the algorithm on $G\backslash\{j\}$. 
\end{definition}
Clearly, $Z^c$ is a non-decreasing step-function by definition. Suppose $Z^c$ has steps at $0=\theta_0\leq\theta_1\leq\ldots\leq\theta_r(=\theta)\leq \theta_{r+1}=1$, in which $\theta$ is the final water level of $i$ at termination. Moreover, suppose
$\forall {y}\in[\theta_{k-1},\theta_{k}), Z^c({y})=Z_{a_k},$
in which $Z_{a_1}\leq\ldots\leq Z_{a_r}$ is a non-decreasing sequence in $\{\mathbf{Z}_{-j}\}$. As a convention we assume $Z^c({y})=1, {y}\in(\theta,1]$, or equivalently $Z_{a_{r+1}}=1$. We now prove  two lemmas.
\begin{lemma}[Dominance Property]\label{lemma1-onlinematching}
For fixed values of $\mathbf{Z}_{-j}$, 
\begin{equation}
\mathbb{E}\{\beta_j| \mathbf{Z}_{-j}\}\geq \int_{0}^{1}g(z)dz-\sum_{k=1}^{r}\theta_k \int_{Z_{a_k}}^{Z_{a_{k+1}}}g(z)dz
\end{equation}
\end{lemma}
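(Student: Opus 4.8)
The plan is to rewrite $\mathbb{E}\{\beta_j\mid\mathbf{Z}_{-j}\}$ as a one-dimensional integral over the arrival time $z=Z_j$ of vertex $j$ and then bound the integrand pointwise using greediness. When $Z_j=z$, write $T(z):=\sum_{i'\in N(j)}x_{i',j}$ for the total fractional amount by which $j$ is matched; this is a deterministic function of $z$ once $\mathbf{Z}_{-j}$ is fixed. Since $\beta_j=T(Z_j)\,g(Z_j)$ and $Z_j\sim\text{unif}[0,1]$ independently of $\mathbf{Z}_{-j}$, we get $\mathbb{E}\{\beta_j\mid\mathbf{Z}_{-j}\}=\int_0^1 T(z)\,g(z)\,dz$. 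So it suffices to prove the pointwise bound $T(z)\ge 1-h(z)$ for almost every $z$, where $h$ is the step function that equals $\theta_k$ on $[Z_{a_k},Z_{a_{k+1}})$ and $0$ on $[0,Z_{a_1})$ — i.e.\ $h(z)$ is the water level of $i$ in the run on $G\backslash\{j\}$ after all vertices with $Z$-value at most $z$ have been processed. (Here I use the step structure of $Z^c$: vertex $a_k$ raises $i$'s level from $\theta_{k-1}$ to $\theta_k$ at time $Z_{a_k}$.)

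The key step is this pointwise bound. First I would note that, since the algorithm is deterministic and the only difference between the instances $G$ and $G\backslash\{j\}$ is the presence of $j$, the two executions agree on every vertex that arrives before $j$; hence the water level of $i$ immediately before $j$ is processed in the run on $G$ (with $Z_j=z$) equals $h(z^-)$. Now invoke greediness: if $T(z)<1$ then no further fraction of $j$ can be matched, so every neighbour of $j$ — in particular $i$ — is saturated after $j$ is processed, which forces $x_{i,j}=1-h(z^-)$ and therefore $T(z)\ge x_{i,j}=1-h(z^-)$; if instead $T(z)=1$ the bound $T(z)\ge 1-h(z^-)$ is trivial. Since $h(z^-)=h(z)$ off a finite set, $T(z)\ge 1-h(z)$ for a.e.\ $z$, as needed. (Note greediness alone is used here; allocation-monotonicity is reserved for the companion Monotonicity Property. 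Also, ties among the $Z$-values have probability zero, so the sorted arrival order and the left limit $h(z^-)$ are well defined a.s.)

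To conclude, substitute and split the integral: $\mathbb{E}\{\beta_j\mid\mathbf{Z}_{-j}\}\ge\int_0^1(1-h(z))\,g(z)\,dz=\int_0^1 g(z)\,dz-\int_0^1 h(z)\,g(z)\,dz$. Using the step description of $h$ and the conventions $Z_{a_0}:=0$, $\theta_0=0$, $Z_{a_{r+1}}=1$, $\theta_r=\theta$, we have $\int_0^1 h(z)\,g(z)\,dz=\sum_{k=0}^{r}\theta_k\int_{Z_{a_k}}^{Z_{a_{k+1}}}g(z)\,dz=\sum_{k=1}^{r}\theta_k\int_{Z_{a_k}}^{Z_{a_{k+1}}}g(z)\,dz$, the $k=0$ term vanishing because $\theta_0=0$. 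This is exactly the claimed inequality.

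The main obstacle — really the only delicate point — is making the coupling argument airtight: that the runs on $G$ and $G\backslash\{j\}$ coincide until $j$ is processed, so that $i$'s pre-$j$ water level in the $G$-run is precisely the value $h(z^-)$ read off from $Z^c$, and that the bookkeeping of $\theta_k$ versus $\theta_{k-1}$ on the half-open intervals $[Z_{a_k},Z_{a_{k+1}})$ is done consistently. Everything else is routine.
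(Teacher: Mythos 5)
Your proof is correct and takes essentially the same approach as the paper's. Both arguments reduce to the pointwise observation that when $j$ arrives at time $z\in[Z_{a_k},Z_{a_{k+1}})$, the pre-$j$ water level of $i$ is $\theta_k$ (because the runs on $G$ and $G\setminus\{j\}$ agree up to $j$'s arrival), and greediness forces the fraction of $j$ that gets matched, $(\Delta\text{primal})_j$, to be at least $1-\theta_k$; the only cosmetic difference is that the paper decomposes $\mathbb{E}\{\beta_j\mid\mathbf{Z}_{-j}\}$ into a sum over the intervals $[Z_{a_k},Z_{a_{k+1}})$ while you phrase the same decomposition as a single integral against the step function $h$.
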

\begin{proof}
If $Z_j\in[Z_{a_k},Z_{a_{k+1}})$ for some $k$, then water level of $i$ at the time that $j$ arrives is $\theta_k$. Hence, following the fact that the algorithm is greedy, $\Delta \text {primal}$ will be at least $1-\theta_k$.
So,
\begin{align}
&\label{dom-matching-1}\mathbb{E}\{\beta_j \mathds{1}(Z_j\in [0,Z_{a_{1}})| \mathbf{Z}_{-j}\}=\int_{0}^{Z_{a_1}}g(z)dz,\\
&\label{dom-matching-2}\forall k\leq r:\mathbb{E}\{\beta_j \mathds{1}(Z_j\in [Z_{a_{k}},Z_{a_{k+1}})| \mathbf{Z}_{-j}\}\geq(1-\theta_k)\int_{Z_{a_k}}^{Z_{a_{k+1}}}g(z)dz.
\end{align}
Since $\mathbb{E}\{\beta_j| \mathbf{Z}_{-j}\}=\sum_{k=0}^{r} {\mathbb{E}\{\beta_j \mathds{1}(Z_j\in [Z_{a_{k}},Z_{a_{k+1}}))| \mathbf{Z}_{-j}\}}$, the lemma follows by summing up the right hand sides of (\ref{4-1}) and (\ref{4-2}) for $1\leq k\leq r.$\qed
\end{proof}
\begin{lemma} [Monotonicity Property] \label{lemma2-onlinematching} For fixed values of $\mathbf{Z}$, 
\begin{equation}
\alpha_i\geq \int_{0}^{\theta} 1-g(Z^c({y})) \, d{y}~.
\end{equation}
\end{lemma}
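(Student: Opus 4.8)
The plan is to compare the actual run of the algorithm on $G$ against the auxiliary run on $G\setminus\{j\}$ that underlies Definition~\ref{def:critical}, and to use allocation-monotonicity to argue that deleting $j$ can only delay the accumulation of water at $i$.

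First I would rewrite $\alpha_i$ in integral form. By the update rule, once the algorithm finishes we have $\alpha_i=\sum_{j':\,i\in L(j')} x_{i,j'}\bigl(1-g(Z_{j'})\bigr)$, the sum ranging over the online vertices that allocated to $i$ during the real run on $G$. Exactly as $Z^c$ is defined for the run on $G\setminus\{j\}$, let $\widetilde Z^c:[0,1]\to\{\mathbf Z\}$ be the critical position function of $i$ for the real run on $G$, and let $\theta^{G}$ denote the final water level of $i$ in that run. The increments $x_{i,j'}$ partition $[0,\theta^{G}]$, and on the sub-interval contributed by $j'$ the integrand $1-g(\widetilde Z^c(y))$ is constant and equal to $1-g(Z_{j'})$; hence $\alpha_i=\int_0^{\theta^{G}}\bigl(1-g(\widetilde Z^c(y))\bigr)\,dy$.

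Next I would prove the key monotonicity claim: for every threshold $t\in[0,1]$, the water-level vector produced by the run on $G$ after processing all online vertices with $Z$-value at most $t$ dominates, coordinatewise, the corresponding vector for the run on $G\setminus\{j\}$. This goes by induction along the $Z$-sorted arrival order. For $t<Z_j$ the two runs process the very same prefix of online vertices, with identical neighborhoods in identical order, so the water-level vectors coincide. When $j$ is processed in the run on $G$, greediness (Definition~\ref{def:greedy:onlinematching}) only adds nonnegative amounts $x_{i',j}\ge0$, so immediately afterwards the $G$-run vector dominates the (unchanged) $G\setminus\{j\}$-run vector. For each subsequent online vertex $j''$, both runs process the same $j''$, so allocation-monotonicity (Definition~\ref{defall-onlinematching}), applied with the two current comparable vectors as $\mathbf{y}_{\textbf{old}}$, preserves the domination after $j''$ is processed. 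This proves the claim; in particular $\theta^{G}\ge\theta$, and since $\widetilde Z^c(y)$ is the $Z$-value of the first vertex whose processing brings $i$'s water level up to $y$ and that event occurs no later in the $G$-run, we get $\widetilde Z^c(y)\le Z^c(y)$ for all $y$.

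Finally I would combine these facts. Because $g$ maps into $[0,1]$, the integrand $1-g(\cdot)$ is nonnegative, so enlarging the domain of integration from $[0,\theta]$ to $[0,\theta^{G}]$ does not decrease the integral; and because $g$ is non-decreasing and $\widetilde Z^c(y)\le Z^c(y)$ on $[0,\theta]$, we have $1-g(\widetilde Z^c(y))\ge1-g(Z^c(y))$ pointwise there. Chaining these, $\alpha_i=\int_0^{\theta^{G}}\bigl(1-g(\widetilde Z^c(y))\bigr)\,dy\ge\int_0^{\theta}\bigl(1-g(\widetilde Z^c(y))\bigr)\,dy\ge\int_0^{\theta}\bigl(1-g(Z^c(y))\bigr)\,dy$, which is the claimed bound. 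I expect the inductive domination step to be the main obstacle: one has to organize the induction over the $Z$-sorted arrival sequence so that Definition~\ref{defall-onlinematching} applies verbatim at each step, and to handle the step-function/continuous-allocation bookkeeping behind $Z^c$ and $\widetilde Z^c$ cleanly; the remaining manipulations are routine.
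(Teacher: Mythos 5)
Your proof is correct and follows essentially the same route as the paper's: both use allocation-monotonicity via induction along the $Z$-sorted arrival order to show that reinserting $j$ can only raise water levels, hence advance (weakly) the critical $Z$-values at which $i$ accumulates water, and then integrate $1-g(\cdot)$ over the water level. Your version is a touch cleaner in naming the critical position function $\widetilde Z^c$ of the $G$-run explicitly rather than reasoning informally about $\tfrac{d\alpha_i}{dy_i}$, but the argument is the same.
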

\begin{proof}
Looking at the instance of the algorithm on $G\backslash \{j\}$, $\frac{d\alpha_i}{d{y}_i}(y_i)$, i.e. the rate at which $\alpha_i$ increases w.r.t. the water-level of $i$, is equal to $1-g(Z^c({y}_i))$ by the definition of $Z^c({y}_i)$. (We need $g(1)=1$ to ensure $\frac{d\alpha_i}{d{y}_i}=0$ whenever ${y}_i>\theta$.) Suppose $Z^c(\theta)=Z_{j'}$ for some $j'\in R\backslash\{j\}$. Now,  insert $j$ back to the graph and run the algorithm again. If $Z_j>Z_{j'}$, then nothing changes until arrival of $j'$, and hence $\frac{d\alpha_i}{d{y}_i}({y}_i)= 1-g(Z^c({y}_i))$ for ${y}_i\in[0,\theta]$.  If $Z_j<Z_{j'}$, suppose ${y}_i=\theta^{c}$ at the time that $j$ arrives. Obviously,  $\frac{d\alpha_i}{d{y}_i}({y}_i)= 1-g(Z^c({y}_i))$ for ${y}_i\in[0,\theta^{c}]$. For ${y}_i\in[\theta^c,\theta]$, suppose $j''\in R$ is the vertex that pours water into $i$ at the time that it reaches the water-level ${y}_i$.  Now, compare the water-level of neighbors of $j''$ before and after inserting $j$. The  effect of inserting vertex $j$ back into the graph is just increasing the water-level of a subset of its neighbors. Now, by a simple induction on the time, using the allocation-monotonicity property one can conclude that the water-level of neighbors of vertex $j''$ at the time of its arrival when $j$ is in the graph is at least as when $j$ was not in the graph. Hence, $Z^c({y}_i)\geq Z_{j''}$, which then implies $\frac{d\alpha_i}{d{y}_i}({y}_i)=1-g(Z_{j''})\geq 1-g(Z^c({y}_i))$, as $g$ is a non-increasing function of its argument. Note also that the final water-level of $i$ when we insert back $j$ to the graph, denoted by $y_{i,\text{final}}$, is at least $\theta$, again due to allocation-monotonicity of the algorithm. The lemma is then proved immediately, as $\alpha_i=\int_{0}^{{y}_{i,{\text{final}}}}\frac{d\alpha_i}{d{y}_i} \, d{y}_i \geq \int_{0}^{\theta} 1-g(Z^c({y}_i)) \, d{y}_i$.\qed
\end{proof}
Now, suppose $g$ (non-decreasing, with $g(1)=1$) and $F$ satisfy the integral equation in (\ref{inteq1}). 
\begin{align}
\mathbb{E}\{\beta_j+&\alpha_i|\mathbf{Z}_{-j}\}\geq \int_{0}^{1}g(z) \, dz-\sum_{k=1}^{r}\theta_k \int_{Z_{a_k}}^{Z_{a_{k+1}}}g(z)dz+\int_{0}^{\theta} 1-g(Z^c(y)) \, dy\nonumber\\
&=F-\sum_{k=1}^{r}\theta_k \int_{Z_{a_k}}^{Z_{a_{k+1}}}g'(z) \, dz+\sum_{k=1}^{r}(\theta_k-\theta_{k-1})(1-g(Z_{a_k}))\nonumber\\
&=F-\sum_{k=1}^{r}\theta_k(g(Z_{a_{k+1}})-g(Z_{a_{k}}))+\sum_{k=1}^{r+1}(\theta_k-\theta_{k-1})(1-g(Z_{a_k}))\nonumber\\
&=F-\sum_{k=1}^{r}\theta_k(g(Z_{a_{k+1}})-g(Z_{a_{k}}))+\sum_{k=1}^{r}\theta_k(g(Z_{a_{k+1}})-g(Z_{a_{k}}))=F
\end{align}
in which the first inequality follows from lemmas (\ref{lemma1-onlinematching}) and (\ref{lemma2-onlinematching}), second equality follows from $g'(z)=g(z)$ (due to equation (\ref{inteq1})) and from the definition of $Z_{a_k}$, third equality follows 
by adding $(\theta_{r+1}-\theta_r)(1-g(Z_{a_{r+1}}))=0$ (as $g(1)=1$) to the RHS, and the final equality comes from rearranging the terms of the last telescopic sum. Hence, $\mathbb{E}\{\alpha_i+\beta_j\}\geq F$, and property (2) holds. The proof is completed by plugging in $g(x)=e^{x-1}$ and $F=\int_0^{1}g(x)dx=\frac{e-1}{e}$\qed
\end{proof}

\section{Online budgeted allocation}
\label{sec-budg-alloc}
In this section, we extend our randomized dual-fitting framework to 
the Online Budgeted Allocation Problem (OnBAP), a.k.a.\ AdWords.
This general problem includes online bipartite matching and online $b-$matching~\citep{KP} as special cases. Moreover, the fractional version of this problem generalizes the fractional vertex weighted online bipartite matching~\citep{AGKM11}.

We again focus on two different settings. The first setting is the worst-case model. Under this setting, \citep{MSVV} proposed a $\frac{e-1}{e}$ competitive algorithm for the integral OnBAP in the infinitesimal-bid limit. We analyze the fractional version of this algorithm, named the \textit{virtual water-filling algorithm}, using our technique. Our proof is a generalization of the proof of Theorem~\ref{th-worstcase-water}. The second setting is the random order arrival model, and analyzing the greedy algorithm. This problem have been studied extensively in \citep{GM08}, where they have shown in the online integral budgeted allocation problem (in the infinitesimal-bid limit) under the random order arrival model, the greedy algorithm that allocates to the maximum bidder among buyers with enough remaining budget achieves a competitive ratio of $\frac{e-1}{e}$. By extending our definition of allocation-monotone  greedy algorithms to the OnBAP, we derive the same result for this more generalized class of fractional and integral algorithms using our proof technique. Again, our proof is a generalization of the proof of Theorem~\ref{th-roa-onlinematching}. We further show that our results can simply be applied to the integral budgeted allocation problem in the infimitesimal-bid limit to obtain results of \citep{GM08}.\footnote{Interestingly, as discussed in Section~\ref{sec-intro}, unlike in the worst-case model, it is not possible to derive results for fractional OnBAP in the random order arrival model by reducing to the integral problem in the infinitesimal-bid limit.}

\subsection{Notations}
\label{secnot}
An instance of  OnBAP is essentially an instance of online bipartite matching where $L$ is the offline set of buyers with budgets $\{B_i\}$ known at the beginning and $R$ is the set of items arriving online. Moreover, for each arrived item $j$, each bidder $i\in N(j)$ bids a value $b_{i,j}$. No buyer can exceed her budget, and fractional allocations are allowed.  The objective  is to maximize the sum of the allocated bids to  buyers in an online fashion. We also use the following standard linear programming formulation of the OnBAP problem and its dual.
\begin{align}
&\text{maximize~~} \quad \sum_{(i,j)\in E}{b_{i,j}x_{i,j}}~~~\text{s.t.}\quad
&\text{minimize~~} \sum_{i\in L }{B_i\alpha_i}+\sum_{j\in R}{\beta_j}~~~\text{s.t.} \nonumber\\
&\sum_{j\in N(i)}{b_{i,j}x_{i,j}}\leq B_i , \quad i\in L\quad
&b_{i,j}\alpha_i+\beta_j\geq b_{i,j}\quad (i,j) \in E \nonumber \\
&\sum_{i\in N(j)}{x_{i,j}}\leq 1, \quad j\in R\quad 
&\alpha_i\geq 0,\quad i\in L \nonumber\\
&x_{i,j} \geq 0, \quad (i,j)\in E \quad 
&\beta_j\geq 0,  \quad j\in R \label{lp-oba}
\end{align}
Furthermore, let $y_i\triangleq \sum_{j\in N(i)}{b_{i,j}x_{i,j}}$ denote the ``water-level" and let $\bar{y}_i=y_i/B_i$ denote the ``normalized water-level" of vertex $i$. Let us also define $\yinew$ and $\yiold$ to be the water level of $i$ after processing and before processing an item $j \in R$ respectively, as in Section~\ref{secnot-onlinematching}.
\subsection{OnBAP under worst-case model}
In this section we define the virtual water-filling algorithm and characterize the best achievable worst-case competitive ratio in OnBAP using our randomized dual-fitting framework. This in turn sheds some light on how our unified  approach works in different problems.
\subsubsection{The virtual water-filling algorithm}
Let $g:{[0,1]\rightarrow [0,1]}$ be a non-decreasing function, and $\tilde{y_i}={b_{i,j}}(g(\frac{y_i}{B_i})-1)$ denotes the \textit{virtual water-level} of $i\in N(j)$. Then virtual water-filling algorithm is described as follows.
\begin{algorithm}[h]
\caption{ Virtual water-filling algorithm}\label{vwater}
\begin{algorithmic}[1]
   \State \textbf{Initialization$~$} $y_i \gets 0$ for every $i\in L$, $x_{i,j}\gets 0$ for every $(i,j)\in E$.
   \For {arrived $j\in R$} 
  
   \While {$\{\exists$ buyer $i\in N(j)$ with $\tilde{y_i}<0\}$ and $\{\sum_{i\in N(j)}{x_{i,j}}<1\}$}
   \State Allocate item $j$ \textit{continuously} among the buyers with the minimum virtual water-level, so as to increase their virtual water-levels equally. Update $y_i\gets y_i+b_{i,j}x_{i,j}$ and $\tilde{y_i}=b_{i,j}(g(\frac{y_i}{B_i})-1)$ continuously for each $i\in N(j)$ according to the allocation.   
   \EndWhile
   \EndFor
   \State \textbf{Return }{$\{x_{i,j}\}_{(i,j)\in E}$}
\end{algorithmic}
\end{algorithm}

We have a theorem on the competitive ratio of virtual water-filling algorithm\footnote{This bound is again tight, due to its tightness in the special case of online bipartite fractional matching.}.
\begin{theorem}\label{th1}
  In OnBAP under worst-case model, there exists a non-decreasing function $ g:{[0,1]\rightarrow[0,1]}$ such that the corresponding virtual water-filling algorithm is $(\frac{e-1}{e})$-competitive.
\end{theorem}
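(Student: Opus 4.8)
The plan is to run the randomized dual-fitting proof of Theorem~\ref{th-worstcase-water} essentially verbatim, replacing the water-level $y_i$ by the \emph{normalized} water-level $\bar y_i=y_i/B_i$ everywhere, and inserting the bid-scalings ($b_{i,j}$ on edge terms, $B_i$ on vertex terms) dictated by the LP~(\ref{lp-oba}). Concretely: before the algorithm runs, draw $\{U_i\}\sim\mathrm{unif}[0,1]$ i.i.d.\ for $i\in L$; when item $j$ arrives, for each $i\in N(j)$ set $\alpha_i=g(U_i)$ if $U_i\le \bar y_{i,\text{new}}$ and $\alpha_i=0$ otherwise, and set $\beta_j=(\Delta\text{primal})_j-\sum_{i\in N(j)}B_i\,\Delta\alpha_i=\sum_{i\in N(j)}\bigl(b_{i,j}x_{i,j}-B_i\,\Delta\alpha_i\bigr)$. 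Property~(1) is immediate from this definition since $(\Delta\text{dual})_j=\sum_{i\in N(j)}B_i\,\Delta\alpha_i+\beta_j=\sum_{i\in N(j)}b_{i,j}x_{i,j}=(\Delta\text{primal})_j$, and by uniformity of $U_i$ we have $\mathbb{E}\{\Delta\alpha_i\}=\int_{\bar y_{i,\text{old}}}^{\bar y_{i,\text{new}}}g(u)\,du$.

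For property~(2), fix an edge $(i,j)\in E$ and let the critical water-level $Y^c$ be the normalized water-level $\bar y_{i,\text{new}}$ of $i$ immediately after $j$ is processed. The Dominance Property transfers unchanged: water-levels are non-decreasing over time, so if $\theta$ is the final normalized water-level of $i$ then $\theta\ge Y^c$ and $\mathbb{E}\{\alpha_i\}=\int_0^{\theta}g(u)\,du\ge\int_0^{Y^c}g(u)\,du$, hence $b_{i,j}\mathbb{E}\{\alpha_i\}\ge b_{i,j}\int_0^{Y^c}g(u)\,du$. The Monotonicity Property becomes $\mathbb{E}\{\beta_j\}\ge b_{i,j}\bigl(1-g(Y^c)\bigr)$. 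To prove it, let $L(j)\subseteq N(j)$ be the buyers whose water-level strictly increased while processing $j$; then only buyers in $L(j)$ contribute to $\beta_j$, and substituting $\mathbb{E}\{\Delta\alpha_{i'}\}$, using $B_{i'}(\bar y_{i',\text{new}}-\bar y_{i',\text{old}})=b_{i',j}x_{i',j}$, bounding $\int_{\bar y_{i',\text{old}}}^{\bar y_{i',\text{new}}}g\le g(\bar y_{i',\text{new}})(\bar y_{i',\text{new}}-\bar y_{i',\text{old}})$ by monotonicity of $g$, and recalling $\tilde y_{i'}=b_{i',j}(g(\bar y_{i'})-1)$, one gets $\mathbb{E}\{\beta_j\}\ge\sum_{i'\in L(j)}x_{i',j}\bigl(-\tilde y_{i',\text{new}}\bigr)$. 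Here the structure of virtual water-filling enters: all buyers in $L(j)$ finish at one common virtual water-level $\tilde y^{\ast}\le 0$, and every neighbour $i'\in N(j)$ satisfies $\tilde y_{i',\text{new}}\ge\tilde y^{\ast}$ (those outside $L(j)$ sat strictly above the rising frontier throughout), so $\mathbb{E}\{\beta_j\}\ge|\tilde y^{\ast}|\sum_{i'\in L(j)}x_{i',j}\ge 0$ (which also gives $\mathbb{E}\{\beta_j\}\ge 0$). Two cases then close the lemma exactly as in Theorem~\ref{th-worstcase-water}: if $\sum_{i'\in N(j)}x_{i',j}<1$ the while-loop halted with $\tilde y_{i'}=0$ for every neighbour, hence $g(Y^c)=1$ and $1-g(Y^c)=0$; otherwise $\sum_{i'\in L(j)}x_{i',j}=1$ and $\mathbb{E}\{\beta_j\}\ge|\tilde y^{\ast}|\ge|\tilde y_{i,\text{new}}|=b_{i,j}\bigl(1-g(Y^c)\bigr)$.

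Adding the two lemmas, $\mathbb{E}\{b_{i,j}\alpha_i+\beta_j\}\ge b_{i,j}\bigl(\int_0^{Y^c}g(u)\,du+1-g(Y^c)\bigr)=F\,b_{i,j}$ for any non-decreasing $g$ with $g(1)=1$ satisfying the integral equation~(\ref{inteq1}); hence $\tfrac1F$ times the expected dual is feasible for the dual of~(\ref{lp-oba}) and has objective value $\tfrac1F$ times the algorithm's (deterministic) value, so weak LP duality yields competitive ratio at least $F$. Choosing $g(x)=e^{x-1}$ — which is non-decreasing, maps $[0,1]$ into $[0,1]$, has $g(1)=1$, and satisfies~(\ref{inteq1}) with $F=\int_0^1 e^{x-1}\,dx=\tfrac{e-1}{e}$ — proves the theorem. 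The only genuinely new ingredient, and the step I expect to be the main obstacle, is the Monotonicity Property: with non-uniform bids one must verify that virtual water-filling still makes $-\tilde y_{i',\text{new}}=b_{i',j}\bigl(1-g(\bar y_{i',\text{new}})\bigr)$ a value common to all $i'\in L(j)$, so that the bid-weighted sum collapses to $|\tilde y^{\ast}|\sum_{i'\in L(j)}x_{i',j}$, and that a partially-allocated item forces $g(Y^c)=1$; every other line is the Section~\ref{sec-onlinematching} argument re-bookkept with the $B_i$-scaling inserted.
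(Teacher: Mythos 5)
Your proposal is correct and follows essentially the same route as the paper: the same $U_i$-based randomized dual construction with bids and budgets inserted as scaling factors, the same Dominance and Monotonicity lemmas with normalized water levels, and the same key inequality that virtual water-filling forces $b_{i',j}(1-g(\bar y_{i',\text{new}}))\ge b_{i,j}(1-g(\bar Y^c))$ for every $i'\in L(j)$, split into the same two terminating cases. Your phrasing via a common frontier value $\tilde y^\ast$ is a minor (and slightly cleaner) repackaging of the paper's term-by-term comparison, not a different argument.
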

The proof is similar to the proof of Theorem~\ref{th-worstcase-water}, and hence we just go through it briefly.
\begin{proof}
Consider the primal-dual linear programs in (\ref{lp-oba}), and let $\{x_{i,j}\}$ be the primal algorithm's allocation. Now, follow the exact same steps and notations as in the proof of Theorem~\ref{th-worstcase-water}, and construct a randomized dual solution prescribed by the framework in Section~\ref{sec-framework}. For each vertex $i\in L$,  initialize $\alpha_i=\beta_j=0,\forall i,j $. Upon arrival of $j$, update the dual variables by setting
\begin{equation}
\alpha_i=\left\{ \begin{array}{ll}
g(U_i) &\mbox{ if $U_i\leq \tilde{y}_{i,\text{new}}$} \\
0 &\mbox{ if $U_i > \tilde{y}_{i,\text{new}}$}
\end{array} \right.,~\forall i \in N(j),
~~~\beta_j=(\Delta \text{primal})_j-\sum_{i\in N(j)}{B_i\Delta \alpha_i}
\end{equation}
in which $\tilde{y}_{i,\text{new}}=\frac{y_{i,\text{new}}}{B_i}$. We again have $(\Delta\text{dual})_j=\sum_{i\in N(j)}{b_{i,j}x_{i,j}}=(\Delta\text{primal})_j$, and so the first property holds. Similar to the proof of Theorem~\ref{th-worstcase-water},  $\mathbb{E}\{\alpha_i\}\geq 0, \forall i\in L$. Now,  fix an edge $(i,j)\in E$, and adapt the definition of the critical water-level for OnBAP, i.e. let $\bar{Y}^c$ be the normalized water-level of $i$ after the algorithm finishes processing $j$. We have the following lemmas as before.
\begin{lemma}[Dominance Property] For a fixed $(i,j)\in E$, $\mathbb{E}\{\alpha_i\}\geq \int_{0}^{\bar{Y}^c} g(u) du$.
\end{lemma}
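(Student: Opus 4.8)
The plan is to reuse the proof of Lemma~\ref{dom-matching} essentially verbatim, with the sole change that the quantity controlling $\alpha_i$ is now the \emph{normalized} water-level $\bar y_i = y_i/B_i$ rather than the water-level $y_i$ itself. First I would note that the dual construction for OnBAP sets $\alpha_i = g(U_i)$ precisely when $U_i \le \bar y_{i,\text{new}}$ and leaves $\alpha_i$ unchanged otherwise; since the virtual water-filling algorithm only ever pours additional water into a bucket, $y_i$ (and hence $\bar y_i$) is nondecreasing over the whole run, so once the threshold $U_i \le \bar y_i$ is met it remains met forever after. Therefore, if $\theta$ denotes the \emph{final} normalized water-level of $i$ at termination, then at the end of the algorithm $\alpha_i = g(U_i)\,\mathds{1}(U_i \le \theta)$, a deterministic function of the single uniform random variable $U_i$.

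The next step is to take the expectation over $U_i \sim \mathrm{unif}[0,1]$, which is drawn independently of everything else before the algorithm runs, giving $\mathbb{E}\{\alpha_i\} = \int_0^{\theta} g(u)\,du$. Finally, since $\bar Y^c$ is by definition the normalized water-level of $i$ immediately after $j$ has been processed, and water-levels never decrease thereafter, we have $\theta \ge \bar Y^c$; combined with $g \ge 0$ this gives $\int_0^{\theta} g(u)\,du \ge \int_0^{\bar Y^c} g(u)\,du$, which is exactly the claimed inequality.

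I do not expect any real obstacle in this lemma — it is the easiest ingredient of the analysis. The only point deserving a sentence of care is the monotonicity of $\bar y_i$ in time, which is needed both to argue that $\alpha_i$ retains its value $g(U_i)$ once set and to conclude $\theta \ge \bar Y^c$; this is immediate from the fact that each step of Algorithm~\ref{vwater} only adds to $y_i$. (One may also observe in passing that $\bar y_i \in [0,1]$ throughout by the budget constraint, so that $g(\bar y_i)$ and $g(U_i)$ are always well defined, though this is not required for the bound itself.) With monotonicity in hand the argument collapses to the same two-line computation as in the matching case.
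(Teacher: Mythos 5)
Your proof is correct and takes essentially the same route as the paper's (one-line) argument: both observe that at termination $\alpha_i = g(U_i)\mathds{1}(U_i \le \theta)$ with $\theta$ the final normalized water-level, take the expectation over $U_i$ to get $\int_0^\theta g(u)\,du$, and use monotonicity of the water-level to conclude $\theta \ge \bar Y^c$. You simply spell out the monotonicity step, which the paper leaves implicit.
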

\begin{proof}
By the definition of $\alpha_i$, if the normalized final water-level of $i$ after termination is $\theta$, then $\mathbb{E}\{\alpha_i\}=\int_{0}^{\theta} g(u) du\geq \int_{0}^{\bar{Y}^c} g(u) du$.\qed
\end{proof}
\begin{lemma}[Monotonicity Property] For a fixed $(i,j)\in E$,
$\mathbb{E}\{\beta_j\} \geq b_{i,j}(1-g(\bar{Y}^c))~.$
\end{lemma}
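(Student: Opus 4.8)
The plan is to mimic the monotonicity argument from the proof of Theorem~\ref{th-worstcase-water} almost verbatim, but to carry out the book-keeping in the \emph{normalized} water-level variable $\bar{y}_{i'} = y_{i'}/B_{i'}$ so that the integral equation~(\ref{inteq1}) still applies. First I would compute $\mathbb{E}\{\beta_j\}$ directly from its definition: with $L(j)$ the set of buyers whose (virtual, equivalently normalized) water-level strictly increased while processing $j$, we have
\begin{equation}
\mathbb{E}\{\beta_j\}=\sum_{i'\in N(j)}b_{i',j}x_{i',j}-\sum_{i'\in L(j)}B_{i'}\,\mathbb{E}\{\Delta\alpha_{i'}\}
=\sum_{i'\in L(j)}b_{i',j}x_{i',j}-\sum_{i'\in L(j)}B_{i'}\int_{\bar{y}_{i',\text{old}}}^{\bar{y}_{i',\text{new}}}g(u)\,du.
\end{equation}
The key substitution is that the amount of buyer $i'$'s budget consumed while processing $j$ equals $b_{i',j}x_{i',j}=y_{i',\text{new}}-y_{i',\text{old}}=B_{i'}(\bar{y}_{i',\text{new}}-\bar{y}_{i',\text{old}})$, which lets us factor out $B_{i'}$ uniformly.

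Next I would bound the integral above exactly as in~(\ref{worst-mono}): since $g$ is non-decreasing, $\int_{\bar{y}_{i',\text{old}}}^{\bar{y}_{i',\text{new}}}g(u)\,du\leq g(\bar{y}_{i',\text{new}})(\bar{y}_{i',\text{new}}-\bar{y}_{i',\text{old}})$, so
\begin{equation}
\mathbb{E}\{\beta_j\}\geq\sum_{i'\in L(j)}b_{i',j}x_{i',j}\bigl(1-g(\bar{y}_{i',\text{new}})\bigr)\geq\bigl(1-g(\bar{Y}^c)\bigr)\sum_{i'\in L(j)}b_{i',j}x_{i',j},
\end{equation}
where the last step uses that the virtual water-filling algorithm equalizes \emph{virtual} water-levels $\tilde{y}_{i'}=b_{i',j}(g(\bar{y}_{i'})-1)$ among buyers in $L(j)$; I need to argue this equalization forces $g(\bar{y}_{i',\text{new}})\le g(\bar{Y}^c)$ for every $i'\in L(j)$, i.e. that buyer $i$ (the edge endpoint) ends processing $j$ with the \emph{largest} value of $g(\bar{y}_{\cdot,\text{new}})$ among buyers whose level rose. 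This is the one place where the AdWords setting genuinely differs from plain matching, because the bids $b_{i',j}$ can vary across neighbors; the equalization is of $b_{i',j}(g(\bar{y}_{i'})-1)$, not of $g(\bar{y}_{i'})$ itself, so I expect this to be the main obstacle. The resolution should be: at the end of processing $j$, either $\sum x_{i',j}=1$ with all active buyers sharing a common (negative) virtual level $v$, in which case $g(\bar{y}_{i',\text{new}})=1+v/b_{i',j}$; or some buyers hit virtual level $0$ and stop. One checks that in either case $g(\bar{y}_{i',\text{new}})\le g(\bar{Y}^c)$ for all $i'\in L(j)$, because $i$ is among the buyers active until the very end (its level rose last, or it is still active), so $i$ attains the maximal normalized level reached during the process — this is precisely the analogue of step~(3) in~(\ref{worst-mono}) where one used $y_{i',\text{new}}\le y_{i,\text{new}}=Y^c$.

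Finally I would handle the two cases exactly as before. If upon completion $\sum_{i'\in N(j)}x_{i',j}<1$, then every neighbor either was exhausted or reached virtual level $0$, forcing $\bar{y}_{i,\text{new}}=\bar{Y}^c$ to satisfy $g(\bar{Y}^c)=1$, so $\mathbb{E}\{\beta_j\}\geq 0 = b_{i,j}(1-g(\bar{Y}^c))$ (using $g(1)=1$ — though here one only needs $g(\bar{Y}^c)=1$, which follows from $\tilde y_i=0$). If instead $\sum_{i'\in N(j)}x_{i',j}=1$, then $\sum_{i'\in L(j)}b_{i',j}x_{i',j}\ge b_{i,j}x_{i,j}$ and, more to the point, $\sum_{i'\in L(j)}b_{i',j}x_{i',j}\ge b_{i,j}$ is \emph{not} generally true — so instead I would combine the bound $\mathbb{E}\{\beta_j\}\ge(1-g(\bar Y^c))\sum_{i'\in L(j)}b_{i',j}x_{i',j}$ with the observation that when the whole item is allocated and $i$ participates, the sum $\sum_{i'\in L(j)}b_{i',j}x_{i',j}$ is at least $b_{i,j}$ times the total mass allocated, i.e. at least $b_{i,j}$; this needs a short argument that the ``effective bid rate'' buyer $i$ sees dominates, which again reduces to the equalization of virtual levels. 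Concluding, $\mathbb{E}\{\beta_j\}\ge b_{i,j}(1-g(\bar Y^c))$ in both cases, which also yields $\mathbb{E}\{\beta_j\}\ge 0$ as a by-product, and combined with the Dominance lemma and~(\ref{inteq1}) gives $\mathbb{E}\{b_{i,j}\alpha_i+\beta_j\}\ge b_{i,j}\bigl(\int_0^{\bar Y^c}g(u)\,du+1-g(\bar Y^c)\bigr)=b_{i,j}F$, establishing dual feasibility in expectation after dividing by $F=\tfrac{e-1}{e}$.
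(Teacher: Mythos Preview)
Your overall scaffolding matches the paper's proof, but the central inequality step is wrong as written and the subsequent ``short argument'' you promise does not exist. Specifically, after reaching
\[
\mathbb{E}\{\beta_j\}\geq\sum_{i'\in L(j)}b_{i',j}x_{i',j}\bigl(1-g(\bar{y}_{i',\text{new}})\bigr),
\]
you try to pull out the scalar $(1-g(\bar Y^c))$, which would require $g(\bar{y}_{i',\text{new}})\le g(\bar Y^c)$ for every $i'\in L(j)$. This is false in general: virtual water-filling equalizes $b_{i',j}(g(\bar y_{i'})-1)$, not $g(\bar y_{i'})$, so a buyer $i'$ with a large bid $b_{i',j}>b_{i,j}$ that ends at the same virtual level $v<0$ as $i$ will have $g(\bar y_{i',\text{new}})=1+v/b_{i',j}>1+v/b_{i,j}=g(\bar Y^c)$. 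Your claim that ``$i$ attains the maximal normalized level'' is therefore unjustified, and no amount of case analysis on ``active until the very end'' rescues it. This is exactly the obstacle you flagged, and your attempted resolution does not work; the second difficulty you then hit (needing $\sum_{i'} b_{i',j}x_{i',j}\ge b_{i,j}$) is an artifact of the same wrong factoring.

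The paper's one-line fix is to pull out the \emph{product} $b_{i,j}(1-g(\bar Y^c))$ rather than just $(1-g(\bar Y^c))$. The virtual-level property of the algorithm says precisely that for every $i'\in L(j)$,
\[
b_{i',j}\bigl(g(\bar y_{i',\text{new}})-1\bigr)\;\le\;b_{i,j}\bigl(g(\bar Y^c)-1\bigr),
\quad\text{i.e.,}\quad
b_{i',j}\bigl(1-g(\bar y_{i',\text{new}})\bigr)\;\ge\;b_{i,j}\bigl(1-g(\bar Y^c)\bigr),
\]
so term-by-term one gets
\[
\sum_{i'\in L(j)}x_{i',j}\,b_{i',j}\bigl(1-g(\bar y_{i',\text{new}})\bigr)\;\ge\;b_{i,j}\bigl(1-g(\bar Y^c)\bigr)\sum_{i'\in L(j)}x_{i',j}.
\]
Now the remaining sum is $\sum_{i'}x_{i',j}$, not $\sum_{i'}b_{i',j}x_{i',j}$, and in the fully-allocated case it equals $1$, giving $\mathbb{E}\{\beta_j\}\ge b_{i,j}(1-g(\bar Y^c))$ immediately; the under-allocated case is handled as you said. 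The moral is that the ``virtual'' quantity $b_{i',j}(g(\bar y_{i'})-1)$ is the one the algorithm controls, and the bound must be taken on that whole expression, not on $g(\bar y_{i'})$ alone.
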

\begin{proof} As in the proof of Lemma~\ref{dom-matching}, we have
\begin{align}
\mathbb{E}\{\beta_j\}&= \sum_{i'\in N(j)}{b_{i',j}x_{i',j}}-\sum_{i'\in L(j)}{\mathbb{E}\{B_{i'}\Delta \alpha_{i'}\}}= \sum_{i'\in L(j)}{b_{i',j}x_{i',j}}-\sum_{i'\in L(j)}{B_{i'}\int_{y_{i',\text{old}}/B_{i'}}^{y_{i',\text{new}}/B_{i'}}g(u)du}\nonumber\\
&\geq \sum_{i'\in L(j)}{b_{i',j}x_{i',j}}-\sum_{i'\in L(j)}{B_{i'}g(y_{i',\text{new}}/B_{i'})(y_{i',\text{new}}-y_{i',\text{old}})/B_{i'}}\nonumber\\
&=\sum_{i'\in L(j)}{x_{i',j}b_{i',j}(1-g(\bar{y}_{i',\text{new}}))}
\overset{(1)}{\geq}b_{i,j}(1-g(\bar{Y}^c))\sum_{i'\in L(j)} x_{i',j}~(\geq 0)\label{worst-mono-2}
\end{align}
in which $(1)$ holds because in the virtual water-filling algorithm, 
$$\forall i' \in L(j): b_{i',j}(g(\bar{y}_{i',\text{new}})-1)\leq b_{i,j}(g(\bar{y}_{i,\text{new}})-1).$$ There are two cases to consider. When $\sum_{i'\in N(j)}x_{i',j}< 1$ holds upon the completion of processing $j$ (which implies $\bar{Y}^c=1$), then $\mathbb{E}\{\beta_j\} \geq 0=1-g(1)$ which completes the proof.  Otherwise $\sum_{i'\in N(j)}x_{i',j}=1$, and the proof follows immediately from (\ref{worst-mono-2}). \qed
\end{proof}
The monotonicity property implies also $\mathbb{E}\{\beta_j\} \geq 0$ as a side result. The proof is then completed by setting $g(x)=e^{x-1}$ and $F=\int_0^{1}g(x)dx=\frac{e-1}{e}$, as in the proof of Theorem~\ref{th-worstcase-water}, and summing the bounds in the two lemmas to conclude that $\mathbb{E}\{b_{i,j}\alpha_i+\beta_j\}\geq Fb_{i,j}$.\qed
\end{proof}
\subsection{OnBAP under the random order arrival model }
\label{secroa}
We now analyze the competitive ratio of greedy algorithms in OnBAP under the random order arrival model using our randomized dual-fitting framework. We focus on a natural class of algorithms that are allocation-monotone and greedy. These two properties were defined in the online matching context in Section~\ref{sec-onlinematching}, and we now generalize those definitions to OnBAP.

 We should assert that our result in this section requires a natural assumption about the input instance. We assume an item can be fully allocated to any individual buyer, i.e. $b_{i,j}\leq B_i, \forall (i,j) \in E$. (Bids are less than or equal to budgets, but not necessarily much less.)

\begin{definition} \label{def:greedy}
In OnBAP, an online fractional allocation algorithm is \emph{greedy} if 
\begin{itemize}
\item[1.]  It is not allowed to keep a positive fraction of the arrived item unallocated whenever allocation is possible.
\item[2.] It always selects the bidder with maximum bid among those neighbors whose budget is not exhausted, with an arbitrary tie-breaking rule.
\end{itemize}
\end{definition}
\begin{remark}
For the special case of online fractional matching, the second part of Definition~\ref{def:greedy} is vacuous since $b_{i,j}= 1$ for all $i,j$, and hence this definition boils down to the definition of greedy algorithms in Section~\ref{sec-onlinematching}.
\end{remark}

In OnBAP, allocation-monotonicity is \emph{exactly} defined as in the online bipartite matching, i.e. Definition~\ref{defall-onlinematching}. (Here, $\textbf{seq}$ is an ordered set of items, such that the data associated with each element of the ordered set encodes its set of offline neighbors along with bids.)

From the above definitions, the following corollary can be seen immediately.
\begin{corollary}
In OnBAP, an algorithm is allocation-monotone greedy if and only if it is greedy and ``breaks ties" (when there is more than one vertex with maximum bid) in an allocation monotone way, as in Definition~\ref{defall-onlinematching}.
\end{corollary}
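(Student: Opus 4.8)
The plan is to unpack the two definitions and observe that, for a greedy algorithm, the processing of a single arriving item $j$ is determined by the current water-level vector together with the item, \emph{except} for how the algorithm splits allocation among the tied top-bidders of the one partially-filled bid-value --- and this residual freedom is precisely the tie-breaking rule. Concretely, list the distinct bids $b^{(1)}>b^{(2)}>\cdots$ occurring in $N(j)$ and let $S_1,S_2,\dots$ be the corresponding groups of neighbors. Definition~\ref{def:greedy} forces the algorithm to pour into $S_1$ until the item is used up or every budget in $S_1$ is full, then into $S_2$, and so on; hence the total mass absorbed in the first $k$ groups equals $A_k=\min\{1,\sum_{\ell\le k}(\text{room in }S_\ell)\}$ regardless of tie-breaking, and at most one group $S_p$ is left partially filled. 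Writing $\Phi_j$ for the induced one-step map $\mathbf{y}_{\textbf{old}}(\textbf{seq})\mapsto\mathbf{y}_{\textbf{new}}(\textbf{seq})$, the only choice is how $\Phi_j$ distributes the residual mass $1-A_{p-1}$ over $S_p$; call this map (from the water-levels on $S_p$ and the residual mass to the new water-levels on $S_p$) the tie-breaking rule. One elementary fact will be used repeatedly: running the tie-breaking rule with residual mass $m'\ge m$ from fixed starting levels yields the same outcome as running it with mass $m$ and then continuing, so it is monotone in the residual mass.

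For the forward implication, an allocation-monotone greedy algorithm is greedy by hypothesis, and its tie-breaking rule inherits allocation-monotonicity: given two configurations on a tied group $S_p$ with equal bids, equal residual mass, and $\preceq$-ordered water-levels on $S_p$, one constructs input sequences $[\textbf{seq}_1,j]$ and $[\textbf{seq}_2,j]$ realizing exactly these two configurations (all higher-bid neighbors exhausted, lower-bid neighbors immaterial), so that Definition~\ref{defall-onlinematching} applied to the whole algorithm delivers the $\preceq$-comparison of the new water-levels on $S_p$. (Here we implicitly use that, by $\preceq$-antisymmetry, allocation-monotonicity makes $\Phi_j$ a genuine function of the water-levels, so that speaking of ``the tie-breaking rule'' is well posed.)

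For the reverse implication, assume the algorithm is greedy with an allocation-monotone tie-breaking rule, and take sequences with $u:=\mathbf{y}_{\textbf{old}}(\textbf{seq}_1)\preceq\mathbf{y}_{\textbf{old}}(\textbf{seq}_2)=:v$; we must show $\Phi_j(u)\preceq\Phi_j(v)$. Since $u\preceq v$ gives weakly more room in every group, $A_k(u)\ge A_k(v)$ for all $k$, so the partially-filled group index $q$ for the run from $v$ satisfies $q\ge p$, where $p$ is the index for the run from $u$. Now split into cases: a neighbor in a group strictly above $S_p$ is filled to $B_i$ in both runs; a neighbor in a group strictly below $S_p$ gets nothing from $u$, so keeps $u_i\le v_i\le(\Phi_j(v))_i$; if $q>p$ a neighbor in $S_p$ is filled to $B_i$ from $v$, dominating its level $\le B_i$ from $u$; and if $q=p$ both runs distribute residual mass within the \emph{same} tied group $S_p$, the run from $u$ having both lower starting levels on $S_p$ and weakly less residual mass ($1-A_{p-1}(u)\le 1-A_{p-1}(v)$). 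In the last case we chain the two monotonicity properties of the tie-breaking rule: first raise the starting levels from $u$ to $v$ at residual mass $1-A_{p-1}(u)$ (allocation-monotonicity), then raise the residual mass to $1-A_{p-1}(v)$ (monotonicity in mass), concluding $(\Phi_j(u))_i\le(\Phi_j(v))_i$ on $S_p$ as well.

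I expect the $q=p$ case to be the only real obstacle: Definition~\ref{defall-onlinematching} compares two runs fed the \emph{same} item, whereas the tied group $S_p$ here receives different residual masses in the two runs, so allocation-monotonicity of the tie-breaking rule cannot be invoked directly and one must interpose the ``monotone in residual mass'' step from the first paragraph. A minor point to get right is the degenerate case in which the item cannot be fully allocated (total room $<1$): then every neighbor is filled to its budget in any run whose total room is below $1$, which is subsumed by the case analysis once ``overflow past the last group'' is treated as the partially-filled case with empty $S_p$.
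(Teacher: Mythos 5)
The paper does not actually prove this corollary; it asserts it ``can be seen immediately,'' so there is no argument of its own to compare against. Your proposal, which is essentially correct, shows the assertion is less immediate than advertised: the $q=p$ case you isolate is a genuine wrinkle. Allocation-monotonicity of the tie-breaking rule, transposed literally from Definition~\ref{defall-onlinematching}, compares two runs fed the \emph{same} residual mass, while $u\preceq v$ feeds the shared tied group \emph{different} residual masses; your ``monotone in residual mass'' lemma is the needed bridge. You justify that lemma by a prefix/continuation property of the pouring process. That is correct provided the tie-breaking rule acts as a Markovian flow on the current state --- which is the implicit model in the paper's continuous-allocation descriptions (e.g., the virtual water-filling pseudocode) --- but it is worth labeling as an assumption on what counts as a tie-breaking rule, not an ``elementary fact'' holding for arbitrary rules.

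Two small repairs. First, $A_k=\min\{1,\sum_{\ell\le k}(\text{room in }S_\ell)\}$ mixes units: room in $S_\ell$ is in budget units while $A_k$ is item mass, so the right expression is $A_k=\min\bigl\{1,\ \sum_{\ell\le k}\sum_{i\in S_\ell}(B_i-y_i)/b^{(\ell)}\bigr\}$; your implication $u\preceq v\Rightarrow A_k(u)\ge A_k(v)$ survives this correction. Second, in the forward direction the construction ``all higher-bid neighbors exhausted'' realizes only residual mass $1$; to extract tie-breaking monotonicity at a general residual mass $m$ you want $[\textbf{seq}_1,j]$ and $[\textbf{seq}_2,j]$ to leave the higher-bid groups with the \emph{same} (possibly nonzero) room in both runs, which is easy to arrange but differs from what you wrote. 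Beyond these points, the case analysis over groups above/below/at $S_p$ is sound, and the observation that $\preceq$-antisymmetry forces the one-step map $\Phi_j$ to be a genuine function of the water levels is exactly the right foundation for making ``the tie-breaking rule'' a well-posed object in the forward implication.
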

 We now state a lower bound on the competitive ratio of an \textit{arbitrary} allocation-monotone greedy algorithm for OnBAP under the random order arrival model. 
\begin{theorem}
\label{th3}
In OnBAP under $b_{i,j}\leq B_i, \forall (i,j)\in E$, any allocation-monotone greedy algorithm achieves a competitive ratio of at least $\frac{e-1}{e}$ under the random order arrival model.
\end{theorem}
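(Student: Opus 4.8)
The plan is to mimic the proof of Theorem~\ref{th-roa-onlinematching} almost verbatim, working with the \emph{normalized} water level $\bar{y}_i = y_i/B_i$ in place of $y_i$ and with the OnBAP primal-dual pair~(\ref{lp-oba}). First I would simulate the random arrival order by drawing i.i.d.\ $Z_j \sim \text{unif}[0,1]$ and letting items arrive in increasing order of $Z_j$. Then I would construct the dual: initialize $\alpha_i = \beta_j = 0$; when item $j$ arrives, for each $i \in N(j)$ set $\alpha_i^{\text{new}} = \alpha_i^{\text{old}} + \frac{b_{i,j}x_{i,j}}{B_i}\,(1 - g(Z_j))$ and $\beta_j = (\Delta\text{primal})_j - \sum_{i\in N(j)} B_i\,\Delta\alpha_i = g(Z_j)\sum_{i\in N(j)} b_{i,j}x_{i,j}$, where $g$ is non-decreasing with $g(1)=1$. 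Since $\sum_i B_i\,\Delta\alpha_i = (1 - g(Z_j))(\Delta\text{primal})_j$, one gets $(\Delta\text{dual})_j = (\Delta\text{primal})_j$, so Property~(1) holds, and $\alpha_i \geq 0$, $\beta_j \geq 0$ are immediate. The point of the $1/B_i$ factor is that $\frac{b_{i,j}x_{i,j}}{B_i}$ is exactly the increase of the normalized water level $\bar{y}_i$, so $\frac{d\alpha_i}{d\bar{y}_i} = 1 - g(Z_j)$ while item $j$ pours into $i$.

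For Property~(2), fix an edge $(i,j) \in E$ and run the algorithm on $G\setminus\{j\}$ with the same $\mathbf{Z}_{-j}$. I would define the critical position function $\bar{Z}^c\colon[0,1]\to\{\mathbf{Z}_{-j}\}$ as in Definition~\ref{def:critical} but tracking the \emph{normalized} water level of $i$; it is a non-decreasing step function with steps at $0 = \bar\theta_0 \leq \bar\theta_1 \leq \cdots \leq \bar\theta_r = \bar\theta \leq \bar\theta_{r+1} = 1$, equal to $Z_{a_k}$ on $[\bar\theta_{k-1},\bar\theta_k)$, with the conventions $Z_{a_0} = 0$ and $Z_{a_{r+1}} = 1$. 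The Monotonicity Property lemma, $\alpha_i \geq \int_0^{\bar\theta} (1 - g(\bar{Z}^c(\bar{y})))\,d\bar{y}$, is proved by the same induction-on-time argument as Lemma~\ref{lemma2-onlinematching}: reinserting $j$ only raises the water levels of some of $j$'s neighbors, so by allocation-monotonicity the water-level vector seen by every later item is no smaller than in the $G\setminus\{j\}$ run, hence $\bar{Z}^c(\bar{y}_i) \geq Z_{j''}$ for the item $j''$ pouring into $i$ at level $\bar{y}_i$, and $1 - g(Z_{j''}) \geq 1 - g(\bar{Z}^c(\bar{y}_i))$ since $g$ is non-decreasing; allocation-monotonicity also forces $\bar{y}_{i,\text{final}} \geq \bar\theta$. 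This transfers with no change, because allocation-monotonicity in OnBAP is defined through the (unnormalized) water-level vectors exactly as for matching. Since this bound is independent of $Z_j$, it persists in conditional expectation given $\mathbf{Z}_{-j}$.

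The only step that genuinely differs — and which I expect to be the main obstacle — is the Dominance Property lemma, $\mathbb{E}\{\beta_j \mid \mathbf{Z}_{-j}\} \geq b_{i,j}\bigl(\int_0^1 g(z)\,dz - \sum_{k=1}^r \bar\theta_k \int_{Z_{a_k}}^{Z_{a_{k+1}}} g(z)\,dz\bigr)$. The crux is the bound $(\Delta\text{primal})_j \geq b_{i,j}(1 - \bar\theta_k)$ whenever $Z_j \in [Z_{a_k}, Z_{a_{k+1}})$, i.e.\ whenever $i$ has normalized water level $\bar\theta_k$ when $j$ arrives. In the matching proof this was simply ``$i$ is unmatched, so $j$ is matched fully''; here I would argue as follows. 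If $\bar\theta_k < 1$, bidder $i$ has remaining budget $B_i(1 - \bar\theta_k) > 0$, and since $b_{i,j} \leq B_i$ (the standing hypothesis of Theorem~\ref{th3}) allocating $j$ to $i$ is feasible, so by part~1 of Definition~\ref{def:greedy} item $j$ is fully allocated. Let $f$ be the fraction of $j$ allocated to bidders with bid $\geq b_{i,j}$. By part~2 of Definition~\ref{def:greedy}, a bidder with bid $< b_{i,j}$ receives a positive fraction only after every bidder of bid $\geq b_{i,j}$ — in particular $i$ — is budget-exhausted; if that ever happens, $i$ alone receives fraction $B_i(1-\bar\theta_k)/b_{i,j} \geq 1 - \bar\theta_k$. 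Hence $f \geq 1 - \bar\theta_k$ in all cases, so $(\Delta\text{primal})_j \geq b_{i,j} f \geq b_{i,j}(1 - \bar\theta_k)$; this is precisely where $b_{i,j} \leq B_i$ is used. Because $\beta_j = g(Z_j)(\Delta\text{primal})_j$, summing $\mathbb{E}\{\beta_j\,\mathds{1}(Z_j \in [Z_{a_k}, Z_{a_{k+1}})) \mid \mathbf{Z}_{-j}\} \geq b_{i,j}(1-\bar\theta_k)\int_{Z_{a_k}}^{Z_{a_{k+1}}} g$ over $k = 0,\dots,r$ (using $\bar\theta_0 = 0$) yields the lemma.

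Finally I would combine the two lemmas exactly as in Theorem~\ref{th-roa-onlinematching}: dividing the dual constraint $b_{i,j}\alpha_i + \beta_j \geq b_{i,j}$ by $b_{i,j}$, we get $\mathbb{E}\{\alpha_i + \beta_j/b_{i,j} \mid \mathbf{Z}_{-j}\} \geq \int_0^{\bar\theta}(1 - g(\bar{Z}^c(\bar{y})))\,d\bar{y} + \int_0^1 g - \sum_{k=1}^r \bar\theta_k \int_{Z_{a_k}}^{Z_{a_{k+1}}} g$, and the same algebra — $g' = g$ from~(\ref{inteq1}), $\int_0^1 g = F$, the step-function form of $\bar{Z}^c$, adding the zero term $(\bar\theta_{r+1} - \bar\theta_r)(1 - g(Z_{a_{r+1}})) = 0$, and a telescoping sum — collapses the right-hand side to $F$. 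Thus $\mathbb{E}\{b_{i,j}\alpha_i + \beta_j\} \geq F\,b_{i,j}$ for every edge, so dividing the dual by $F$ makes it feasible in expectation while preserving its value; Property~(2) holds. Taking $g(x) = e^{x-1}$ and $F = \int_0^1 e^{x-1}\,dx = \frac{e-1}{e}$ finishes the proof.
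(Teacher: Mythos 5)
Your proposal is essentially correct and follows the same route as the paper: same dual update rule with the $1/B_i$ normalization, same adaptation of the critical position function to normalized water levels, same telescoping calculation at the end. The only substantive difference is in how you prove the Dominance Property. The paper splits into two cases — either $i$'s budget is exhausted during processing $j$ (giving $(\Delta\text{primal})_j \geq B_i(1-\bar\theta_k)$) or it is not (in which case $j$ is fully allocated to bidders of bid at least $b_{i,j}$, giving $(\Delta\text{primal})_j \geq b_{i,j}$) — and then observes $\min\{B_i(1-\bar\theta_k), b_{i,j}\} \geq b_{i,j}(1-\bar\theta_k)$ using $b_{i,j}\leq B_i$. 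Your argument via the fraction $f$ of $j$ sent to high bidders is a re-packaging of exactly the same two cases and uses $b_{i,j}\leq B_i$ in the same place, so there is no real divergence.

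One small slip in exposition worth flagging: you assert unconditionally that ``since $b_{i,j}\leq B_i$ ... by part 1 of Definition~\ref{def:greedy} item $j$ is fully allocated.'' That conclusion is true only in the sub-case where $i$'s budget is never exhausted while processing $j$; if all neighboring budgets (including $i$'s) run out, $j$ can remain partially unallocated, and $b_{i,j}\leq B_i$ plays no role in that sentence. Your subsequent case analysis on $f$ does correctly cover both sub-cases — when $i$ exhausts you get $f \geq B_i(1-\bar\theta_k)/b_{i,j} \geq 1-\bar\theta_k$, and when $i$ does not exhaust the greedy property forces $f=1$ — so the bound $(\Delta\text{primal})_j \geq b_{i,j}(1-\bar\theta_k)$ goes through and the proof is sound. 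Just reword the leading sentence so it does not claim full allocation unconditionally.
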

\begin{remark}
 Following our result for the fractional online budgeted allocation problem, using the standard technique of independent randomized rounding of the fractional solution, one can obtain a $\frac{e-1}{e}$-competitive algorithm for integral OnBAP in the infinitesimal-bid limit. This result has been stated in \citep{GM08}. In Section~\ref{secoba}, we will also provide a simple primal-dual reduction to obtain this result from ours without the need for rounding.
\end{remark}
\begin{proof}
Consider the standard primal-dual LP's for the OnBAP problem and its dual described in (\ref{lp-oba}), and let $\{x_{i,j}\}$ be the feasible fractional allocation of the primal online algorithm. Proof follows exactly the same steps as in the proof of Theorem~\ref{th-roa-onlinematching}, except we update the dual variables as follows for every $i\in N(j)$, upon arrival of $j$.
 \begin{equation} 
 \alpha_i^{\text{new}}=\alpha^{\text{old}}_i+\frac{b_{i,j}x_{i,j}(1-g(Z_j))}{B_i},~~~\beta_j =(\Delta\text{primal})_j-\sum_{i \in N(j)}{B_i\Delta \alpha_i} = \left( \sum_{i\in N(j)}{b_{i,j}x_{i,j}} \right)g(Z_j) 
 \end{equation} 
The function $g$ is as described in the proof of Theorem~\ref{th-roa-onlinematching}. Clearly $\mathbb{E}\{\alpha_i\}\geq 0$ and $\mathbb{E}\{\beta_j\}\geq 0, \forall i,j$. Moreover, during processing every arrived vertex $j\in R$, 
$$(\Delta{\text{dual}})_j=\beta_j+\sum_{i\in N(j)}{B_i\Delta \alpha_i}=(\Delta {\text{primal}})_j$$
and hence property (1)  of Section~\ref{sec-framework} is satisfied by the dual solution. In order to show that property (2) holds,  we follow exactly the same steps as in proof of Theorem~\ref{th-roa-onlinematching} by defining the \emph{critical position function}, i.e.  $Z^c(\bar{y_i})$, as in Definition~\ref{def:critical} by replacing the water-level with normalized water-level. Let also $Z^c$ be the same step function as in the proof of Theorem~\ref{th-roa-onlinematching} when substituting the role of water-level with normalized water-level. Then following lemmas hold.
\begin{lemma}[Dominance Property]\label{lemma1}
For fixed values of $\mathbf{Z}_{-j}$, 
\begin{equation}
\mathbb{E}\{\beta_j| \mathbf{Z}_{-j}\}\geq  b_{i,j}\Big(\int_{0}^{1}g(z)dz-\sum_{k=1}^{r}\theta_k \int_{Z_{a_k}}^{Z_{a_{k+1}}}g(z)\, dz\Big)
\end{equation}
\end{lemma}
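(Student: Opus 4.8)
The plan is to mirror the proof of Lemma~\ref{lemma1-onlinematching} (the Dominance Property in the online matching setting), adapting it to carry the bid value $b_{i,j}$ through the computation. First I would fix values of $\mathbf{Z}_{-j}$ and run the algorithm on $G\backslash\{j\}$ to obtain the step function $Z^c(\bar{y_i})$, with breakpoints $0=\theta_0\le\theta_1\le\ldots\le\theta_r=\theta\le\theta_{r+1}=1$ in the \emph{normalized} water level, and the associated non-decreasing sequence $Z_{a_1}\le\ldots\le Z_{a_r}$ in $\{\mathbf{Z}_{-j}\}$ (with the convention $Z_{a_{r+1}}=1$). The key structural observation is the same one used before: if $Z_j\in[Z_{a_k},Z_{a_{k+1}})$, then by allocation-monotonicity the normalized water level of $i$ at the moment $j$ arrives (in the full graph) is at least $\theta_k$, so the residual capacity of $i$ that $j$ can still pour into is at most $(1-\theta_k)B_i$.

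The main new ingredient is the greedy/maximum-bid condition (part 2 of Definition~\ref{def:greedy}). When $j$ arrives and $i$ still has unexhausted budget, greediness forces $j$ to be fully allocated (part 1) and to go to the maximum-bid available neighbor; since $i$ is available and bids $b_{i,j}$, the total bid value allocated during the processing of $j$ is at least $b_{i,j}$ times the fraction of $j$ that can be poured into $i$'s residual capacity — more carefully, $(\Delta\text{primal})_j\ge b_{i,j}\min\{1,(1-\theta_k)B_i/b_{i,j}\}$, and using the assumption $b_{i,j}\le B_i$ together with $\theta_k\le 1$ one gets $(\Delta\text{primal})_j\ge b_{i,j}(1-\theta_k)$. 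Since $\beta_j=(\Delta\text{primal})_j\, g(Z_j)$, this yields the conditional bounds
\[
\mathbb{E}\{\beta_j\,\mathds{1}(Z_j\in[0,Z_{a_1}))\mid\mathbf{Z}_{-j}\}\ge b_{i,j}\int_0^{Z_{a_1}}g(z)\,dz,
\]
\[
\forall k\le r:\ \mathbb{E}\{\beta_j\,\mathds{1}(Z_j\in[Z_{a_k},Z_{a_{k+1}}))\mid\mathbf{Z}_{-j}\}\ge b_{i,j}(1-\theta_k)\int_{Z_{a_k}}^{Z_{a_{k+1}}}g(z)\,dz,
\]
exactly as in \eqref{dom-matching-1} and \eqref{dom-matching-2} but scaled by $b_{i,j}$. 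Summing over $k=0,\ldots,r$ and simplifying $\sum_{k}\int_{Z_{a_k}}^{Z_{a_{k+1}}}g = \int_0^1 g$ gives the claimed inequality $\mathbb{E}\{\beta_j\mid\mathbf{Z}_{-j}\}\ge b_{i,j}\big(\int_0^1 g(z)\,dz-\sum_{k=1}^r\theta_k\int_{Z_{a_k}}^{Z_{a_{k+1}}}g(z)\,dz\big)$.

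The step I expect to be the main obstacle is justifying the bound $(\Delta\text{primal})_j\ge b_{i,j}(1-\theta_k)$ rigorously. The subtlety is that $i$ may become unavailable (budget exhausted) partway through the processing of $j$, and that other neighbors with higher bids may absorb part of $j$; I need to argue that in either case the total bid value collected is still at least $b_{i,j}(1-\theta_k)$. When $i$ stays available throughout, greediness sends the full unit of $j$ (worth at least $b_{i,j}\ge b_{i,j}(1-\theta_k)$ in bid value, since it all goes to neighbors with bid $\ge b_{i,j}$); when $i$'s budget gets exhausted during processing, the amount poured into $i$ alone already has bid value equal to its residual budget, which is at most $(1-\theta_k)B_i$ but — crucially — the item is still being consumed at bid rate $\ge b_{i,j}$ per unit of $j$, and the fraction of $j$ consumed up to that point times $b_{i,j}$ plus whatever follows must cover $(1-\theta_k)$ worth. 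Here the assumption $b_{i,j}\le B_i$ is what makes the arithmetic close: one unit of $j$ at bid rate $b_{i,j}$ consumes at most $b_{i,j}\le B_i$ of any budget, so $i$'s residual capacity $(1-\theta_k)B_i$ relative to the normalized scale behaves consistently. Once this inequality is in hand, the rest is the same telescoping bookkeeping as in Lemma~\ref{lemma1-onlinematching}.
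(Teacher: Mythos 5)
Your proof mirrors the paper's exactly: you identify the same $\min\{b_{i,j},(1-\theta_k)B_i\}$ lower bound on $(\Delta\text{primal})_j$, apply $b_{i,j}\leq B_i$ to get $(\Delta\text{primal})_j\geq b_{i,j}(1-\theta_k)$, and then carry out the same interval summation as in the matching case. The obstacle you flag in the final paragraph disappears with the paper's clean dichotomy (which you essentially state yourself): if $i$'s budget is exhausted during $j$'s processing, the bid value poured into $i$ alone is exactly (rather than at most) $(1-\theta_k)B_i\geq(1-\theta_k)b_{i,j}$, and if $i$ remains available throughout, the greedy max-bid rule sends the whole unit of $j$ to bidders of bid at least $b_{i,j}$, yielding $(\Delta\text{primal})_j\geq b_{i,j}$; either way the $\min$ bound holds without any need to track bid rates over time.
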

\begin{proof}
If $Z_j\in[Z_{a_k},Z_{a_{k+1}})$ for some $k$, then normalized water level of $i$ at the time that $j$ arrives is $\theta_k$. Let $\{x_{i',j}\}_{i'\in N(j)}$ denotes the allocation at this step. After the algorithm finishes processing $j$, either the budget of $i$ is exhausted, or the item has been fully allocated without using the whole budget of $i$. In the first case, $\sum_{i'\in N(j)}{b_{i',j}x_{i',j}}\geq B_i (1-\theta_k)$ as the algorithm is greedy. In the second case, $\sum_{i'\in N(j)}{b_{i',j}x_{i',j}}\geq{b_{i,j}}$, as greedy algorithms continuously allocate the item to the bidder with maximum bid among those with non-zero remaining budget, which has a bid at least equal to $b_{i,j}$ as $i$'s budget is not yet exhausted during processing $j$. Hence,
\begin{equation}
\beta_j\geq \min \{ B_i (1-\theta_k),{b_{i,j}}\} g(Z_j) \geq b_{i,j}(1-\theta_k) g(Z_j)
\end{equation}
where the last inequality is true because $B_i\geq  b_{i,j}$. So,
\begin{align}
&\label{4-1}\mathbb{E}\{\beta_j \mathds{1}(Z_j\in [0,Z_{a_{1}})|\mathbf{Z}_{-j}\}=b_{i,j}\int_{0}^{Z_{a_1}}g(z)\, dz,\\
&\label{4-2}\forall k\leq r:\mathbb{E}\{\beta_j \mathds{1}(Z_j\in [Z_{a_{k}},Z_{a_{k+1}})|\mathbf{Z}_{-j}\}\geq b_{i,j} (1-\theta_k) \int_{Z_{a_k}}^{Z_{a_{k+1}}}g(z)\, dz~.
\end{align}
Since $\mathbb{E}\{\beta_j| \mathbf{Z}_{-j}\}\geq\sum_{k=0}^{r} {\mathbb{E}\{\beta_j \mathds{1}(Z_j\in [Z_{a_{k}},Z_{a_{k+1}}))|\mathbf{Z}_{-j}\}}$, the lemma follows by summing up the right-hand sides of (\ref{4-1}) and (\ref{4-2}) for $1\leq k\leq r.$\qed
\end{proof}
\begin{lemma} [Dominance Property] \label{lemma2}For fixed values of $\mathbf{Z}$, $\alpha_i\geq \int_{0}^{\theta} 1-g(Z^c(\bar{y})) \, d\bar{y}$~.
\end{lemma}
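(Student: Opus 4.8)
**Proof proposal for Lemma~\ref{lemma2} (the OnBAP monotonicity/dominance property $\alpha_i \geq \int_0^\theta 1 - g(Z^c(\bar y))\,d\bar y$).**

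The plan is to follow the proof of Lemma~\ref{lemma2-onlinematching} essentially verbatim, with water-levels replaced by normalized water-levels $\bar y_i = y_i/B_i$, and to check that the one place where the OnBAP update rule differs — the factor $b_{i,j}/B_i$ in $\Delta\alpha_i$ — is exactly compensated by the relationship $\Delta \bar y_i = b_{i,j} x_{i,j}/B_i$. First I would observe that, running the algorithm on $G\setminus\{j\}$ with the fixed $\mathbf{Z}_{-j}$, whenever vertex $j'$ pours into $i$ (at normalized water-level $\bar y_i$, so $Z^c(\bar y_i) = Z_{j'}$), the contribution to $\alpha_i$ over that infinitesimal step is $\tfrac{b_{i,j'} x_{i,j'}}{B_i}(1 - g(Z_{j'})) = (1 - g(Z^c(\bar y_i)))\,d\bar y_i$, since $d\bar y_i = b_{i,j'}x_{i,j'}/B_i$. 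Hence $\tfrac{d\alpha_i}{d\bar y_i}(\bar y_i) = 1 - g(Z^c(\bar y_i))$ on $G\setminus\{j\}$, exactly as in Lemma~\ref{lemma2-onlinematching}; the boundary condition $g(1)=1$ again guarantees this rate vanishes for $\bar y_i > \theta$.

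Next I would insert $j$ back into the graph and argue, exactly as before, that the normalized critical position function can only go up. If $Z_j > Z^c(\theta)$ nothing relevant changes before the final pour and we are done. Otherwise, let $\bar\theta^c$ be the normalized water-level of $i$ when $j$ arrives; on $[0,\bar\theta^c]$ the rate is unchanged, and on $[\bar\theta^c, \theta]$ I would use allocation-monotonicity (Definition~\ref{defall-onlinematching}, which is stated to carry over verbatim to OnBAP): inserting $j$ only raises a subset of its neighbors' water-levels, so by induction on arrival time every vertex $j''$ that pours into $i$ afterward sees neighbor water-levels at least as high as it did without $j$, forcing $Z^c(\bar y_i) \geq Z_{j''}$. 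Since $g$ is non-decreasing, $\tfrac{d\alpha_i}{d\bar y_i}(\bar y_i) = 1 - g(Z_{j''}) \geq 1 - g(Z^c(\bar y_i))$ on this range as well. Allocation-monotonicity also gives that the final normalized water-level $\bar y_{i,\text{final}}$ when $j$ is present is at least $\theta$, so $\alpha_i = \int_0^{\bar y_{i,\text{final}}} \tfrac{d\alpha_i}{d\bar y_i}\,d\bar y_i \geq \int_0^\theta 1 - g(Z^c(\bar y))\,d\bar y$.

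I expect the proof of this lemma to be entirely routine given Lemma~\ref{lemma2-onlinematching}; the only genuine check — and the one worth spelling out — is that the budget normalization is consistent throughout, i.e.\ that measuring progress in units of $\bar y_i$ rather than $y_i$ makes the OnBAP dual update $\Delta\alpha_i = b_{i,j}x_{i,j}(1-g(Z_j))/B_i$ look identical to the matching update $\Delta\alpha_i = x_{i,j}(1-g(Z_j))$. (Note the assumption $b_{i,j}\leq B_i$ is not needed for this lemma; it was used only in Lemma~\ref{lemma1} to lower-bound $\Delta\text{primal}$. The monotonicity argument here is insensitive to the sizes of the bids.) With both lemmas in hand, summing $\mathbb{E}\{\beta_j \mid \mathbf{Z}_{-j}\} + B_i\,\mathbb{E}\{\tfrac{d\alpha_i}{d\bar y}\text{-integral}\}$ and performing the same telescoping computation as in the proof of Theorem~\ref{th-roa-onlinematching} — now with an overall factor $b_{i,j}$ — yields $\mathbb{E}\{b_{i,j}\alpha_i + \beta_j \mid \mathbf{Z}_{-j}\} \geq F\,b_{i,j}$, hence $\mathbb{E}\{b_{i,j}\alpha_i + \beta_j\} \geq F\,b_{i,j}$, which is precisely dual feasibility in expectation after dividing by $F = (e-1)/e$.
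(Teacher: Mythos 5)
Your proof is correct and takes exactly the approach the paper intends: the paper's own ``proof'' of this lemma is a one-line pointer to Lemma~\ref{lemma2-onlinematching} with water-level replaced by normalized water-level, and you have spelled out why that substitution works, in particular identifying the key cancellation $\Delta\alpha_i = (b_{i,j}x_{i,j}/B_i)(1-g(Z_j)) = \Delta\bar y_i\,(1-g(Z_j))$ which makes $d\alpha_i/d\bar y_i = 1 - g(Z^c(\bar y_i))$ exactly as in the matching case. Your remark that $b_{i,j}\le B_i$ plays no role here is also accurate (it is used only in Lemma~\ref{lemma1}); the only tiny slip is writing $B_i$ instead of $b_{i,j}$ in the closing ``summing'' sentence, but your final conclusion $\mathbb{E}\{b_{i,j}\alpha_i+\beta_j\}\ge F\,b_{i,j}$ is the right one and that sentence is outside the scope of the lemma anyway.
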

\begin{proof}
The proof is exactly as the proof of Lemma~\ref{lemma2-onlinematching}, by substituting the water-level with normalized water-level.\qed
\end{proof}
By the same calculations as in the proof of Theorem~\ref{th-roa-onlinematching}, for any $g$ (non-decreasing, satisfying $g(1)=1$) and $F$ that satisfies the integral equation (\ref{inteq1}), the property (2) in Section~\ref{sec-framework} holds by an application of the above lemmas, i.e. $\mathbb{E}\{b_{i,j}\alpha_i+\beta_j\}\geq F b_{i,j}$.  The proof is then completed by setting $g(x)=e^{x-1}$ and $F=\int_0^{1}g(x)dx=\frac{e-1}{e}.$\qed
\end{proof}
\subsubsection{Integral greedy algorithms in the infinitesimal-bid limit}
In this section, we obtain the result of \citep{GM08} for the integral greedy algorithm using our dual-fitting approach. The integral greedy algorithm is very simple to describe. Upon arrival of a new item, the algorithm selects the maximum bidder among neighbors whose budgets are not yet exhausted, and assigns the item to this bidder if possible (i.e., if the remaining budget is at least the bid value), and otherwise does nothing. Tie-breaking is done using a fixed ordering on the bidders. Let us denote the greedy algorithm in \citep{GM08} by \texttt{I-greedy}, and let \texttt{AM-greedy} be an arbitrary allocation-monotone greedy (fractional) algorithm whose decisions match those of \texttt{I-greedy} whenever \texttt{I-greedy} is able to assign the item. We have this lemma, with proof provided in the appendix.
\begin{lemma}
\label{lemma-igreedy}
There exists a randomized dual solution for OnBAP with random arrival s.t.
\begin{itemize}
\item The objective value of \texttt{I-greedy} is at least the dual objective divided by
$1+\max\{\frac{b_{i,j}}{B_i}\}$,
\item The dual becomes feasible in expectation when 
divided by $\frac{e-1}{e}$.
\end{itemize}
\end{lemma}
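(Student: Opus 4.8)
The plan is to reuse almost verbatim the randomized dual construction from the proof of Theorem~\ref{th3}, applied to the fractional algorithm \texttt{AM-greedy}, and then to compare the resulting primal objective with that of \texttt{I-greedy}. First I would run \texttt{AM-greedy} with the i.i.d.\ variables $\{Z_j\}\sim\mathrm{unif}[0,1]$ inducing the random arrival order, and construct $\{\alpha_i\},\{\beta_j\}$ exactly by the update rule $\alpha_i^{\text{new}}=\alpha^{\text{old}}_i+\frac{b_{i,j}x_{i,j}(1-g(Z_j))}{B_i}$ and $\beta_j=\bigl(\sum_{i\in N(j)}b_{i,j}x_{i,j}\bigr)g(Z_j)$, with $g(x)=e^{x-1}$, $F=\frac{e-1}{e}$. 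By the proof of Theorem~\ref{th3}, this dual is feasible in expectation after dividing by $F$, which gives the second bullet immediately; it also satisfies property~(1), so the dual objective equals the primal objective of \texttt{AM-greedy}. Thus the whole content of the lemma reduces to the first bullet: showing that $\textbf{ALG}_{\texttt{I-greedy}} \ge \frac{1}{1+\max_{(i,j)}\{b_{i,j}/B_i\}}\cdot\textbf{ALG}_{\texttt{AM-greedy}}$, pointwise on each realization of $\{Z_j\}$ (hence also in expectation).

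The heart of the argument is therefore a deterministic comparison between the fractional algorithm \texttt{AM-greedy} and the integral algorithm \texttt{I-greedy} on the same instance and same arrival order. I would argue by induction on the arrival sequence, tracking for each buyer $i$ the difference between the water-level produced by \texttt{AM-greedy} and the (integral) water-level produced by \texttt{I-greedy}. Whenever \texttt{I-greedy} is able to assign an arriving item $j$ to its chosen maximum bidder $i^\star$, the two algorithms make the same decision by hypothesis, so the gap is unaffected. The only way \texttt{I-greedy} "loses" relative to \texttt{AM-greedy} on item $j$ is when \texttt{I-greedy}'s preferred bidder has insufficient remaining budget to accept $j$ integrally: then \texttt{I-greedy} collects nothing from $j$, while \texttt{AM-greedy} may collect up to $\max_{i\in N(j)} b_{i,j}$. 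But that preferred bidder already has remaining budget less than $b_{i^\star,j}\le \max_i\{b_{i,j}/B_i\}\cdot B_{i^\star}$, i.e.\ it is within a $\max_{(i,j)}\{b_{i,j}/B_i\}$ fraction of full, and \texttt{I-greedy} has therefore already "banked" at least $B_{i^\star}-b_{i^\star,j}\ge (1-\max\{b_{i,j}/B_i\})B_{i^\star}$ of value into that buyer. Charging the unearned value of item $j$ against the value banked in the blocking buyer, and summing, yields the claimed $(1+\max\{b_{i,j}/B_i\})$ loss factor; in the infinitesimal-bid limit $\max\{b_{i,j}/B_i\}\to 0$ and we recover $\frac{e-1}{e}$ exactly.

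The main obstacle I anticipate is making the charging scheme precise and avoiding double-counting: a single buyer may block several arriving items over the course of the run, so I cannot simply charge each blocked item to the full banked budget of its blocker. The clean way to handle this is to observe that once \texttt{I-greedy}'s preferred bidder for $j$ is "nearly full," \texttt{I-greedy} will, on every subsequent item for which that same bidder is preferred, again decline — but by that point \texttt{AM-greedy}'s total collection is also bounded, because in the fractional algorithm an item only goes to $i^\star$ (or to a buyer with an even larger bid that is itself not yet exhausted), so the total value \texttt{AM-greedy} can route through a "near-full" region is controlled by the same $\max\{b_{i,j}/B_i\}$ slack. I would formalize this by bounding, buyer by buyer, the total value collected by \texttt{AM-greedy} on items declined by \texttt{I-greedy} in terms of $\max\{b_{i,j}/B_i\}\cdot B_i$ plus \texttt{I-greedy}'s own collection at that buyer, then sum over buyers. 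The remaining steps — verifying property~(1) and the expected-feasibility bound — are exactly as in Theorem~\ref{th3} and require no new work.
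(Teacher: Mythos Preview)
Your approach departs from the paper's in a structural way that turns out to be fatal. The paper does \emph{not} run \texttt{AM-greedy} on the original sequence $\textbf{seq}$ and then compare the two primals. Instead it defines $\hat{\textbf{seq}}$ to be $\textbf{seq}$ with every item deleted that \texttt{I-greedy} failed to assign, observes that $\texttt{I-greedy}(\textbf{seq})\equiv\texttt{I-greedy}(\hat{\textbf{seq}})\equiv\texttt{AM-greedy}(\hat{\textbf{seq}})$ (the last equality because on $\hat{\textbf{seq}}$ every item is one \texttt{I-greedy} succeeds on, so the two algorithms stay in lockstep throughout), and applies the dual construction of Theorem~\ref{th3} to \texttt{AM-greedy}$(\hat{\textbf{seq}})$. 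For each deleted item it then sets $\beta_j=\tfrac{e-1}{e}\,b_{i,j}$ by hand and bounds this extra dual mass using the fact that the deleted item's maximum bidder was already within $b_{i,j}$ of full. The comparison is thus between \texttt{I-greedy}'s value and the dual objective, \emph{not} between \texttt{I-greedy} and \texttt{AM-greedy} run on the same input.

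Your inductive step---``whenever \texttt{I-greedy} is able to assign $j$, the two algorithms make the same decision by hypothesis''---is false once the water-levels have diverged. The matching hypothesis is a statement about identical states; after \texttt{AM-greedy} fractionally absorbs an item that \texttt{I-greedy} declined, the states differ and the hypothesis gives no guarantee. Concretely, with one buyer, $B=1$, and bids $0.6,\,0.6,\,0.3$: \texttt{I-greedy} assigns item~1, declines item~2, and assigns item~3 (total $0.9$), whereas \texttt{AM-greedy} assigns item~1, fractionally fills the buyer on item~2, and then cannot touch item~3 at all. Worse, the pointwise inequality you are reducing everything to, $\texttt{I-greedy}\ge \tfrac{1}{1+\epsilon}\,\texttt{AM-greedy}$ with $\epsilon=\max\{b_{i,j}/B_i\}$, is itself false: one buyer with $B=1$ and two items of bid $0.6$ gives $\texttt{I-greedy}=0.6$, $\texttt{AM-greedy}=1.0$, and $0.6<1/1.6$. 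No charging scheme can repair this, since the shortfall $0.4$ already exceeds $\epsilon\cdot\texttt{I-greedy}=0.36$. The fix is precisely the paper's move: excise the declined items from the sequence before invoking \texttt{AM-greedy}, so that the dual objective before augmentation equals \texttt{I-greedy}'s value exactly, and only the manually added $\beta_j$'s need to be charged.
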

This lemma shows that in the infinitesimal-bid limit, \texttt{I-greedy} becomes $(\frac{e-1}{e})$-competitive under random arrival order, as desired.
\section{Results for online general assignment problem}
\label{sec-ongap}
So far we have discussed OnBAP. There is a more general problem, the Online Generalized Assignment Problem (OnGAP), which includes all online allocation problems discussed in this paper as special cases. OnGAP is similar to OnBAP except we have different weights and bids when selling an item to a buyer. The standard primal-dual linear programming formulation of OnGAP and its dual are as follows.
\begin{align}
&\text{maximize~~} \quad \sum_{(i,j)\in E}{b_{i,j}x_{i,j}}~~~\text{s.t.}\quad
&\text{minimize~~} \sum_{i\in L }{B_i\alpha_i}+\sum_{j\in R}{\beta_j}~~~\text{s.t.} \nonumber\\
&\sum_{j\in N(i)}{w_{i,j}x_{i,j}}\leq B_i , \quad i\in L\quad
&w_{i,j}\alpha_i+\beta_j\geq b_{i,j}\quad (i,j) \in E \nonumber \\
&\sum_{i\in N(j)}{x_{i,j}}\leq 1, \quad j\in R\quad 
&\alpha_i\geq 0,\quad i\in L \nonumber\\
&x_{i,j} \geq 0, \quad (i,j)\in E \quad 
&\beta_j\geq 0,  \quad j\in R \label{lp-ongap}
\end{align}
While it is tempting to design a primal-dual algorithm that achieves a constant worst-case competitive ratio in the fractional OnGAP, we show that no algorithm can be constant-competitive via 
the following theorem, whose proof is left to the appendix. 
\begin{theorem}
\label{thm-ongap1}
For a fixed $\eta \geq 1$, consider instances of OnGAP when $\forall (i,j) \in E: w_{i,j}\leq b_{i,j} \leq \eta w_{i,j}$. Then worst-case competitive-ratio of online algorithms on these instances is bounded above by $O(\frac{1}{{\log (\eta)}})$.
\end{theorem}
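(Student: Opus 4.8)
The plan is to defeat every online algorithm using an explicit family of \emph{nested} OnGAP instances that needs only a single buyer. Fix $\eta\ge 2$ and put $K=\lfloor\log_2\eta\rfloor$. The offline side is one buyer $i$ with budget $B_i=1$. Items $j_0,j_1,\dots,j_K$ arrive one at a time in this order; item $j_k$ is adjacent only to $i$, with $w_{i,j_k}=1$ and $b_{i,j_k}=2^k$. Since $1\le 2^k\le 2^K\le\eta$, every edge obeys $w_{i,j}\le b_{i,j}\le\eta\,w_{i,j}$, so these are legal instances. For $0\le m\le K$ let $I_m$ be the instance in which the stream terminates right after $j_m$ (so $I_m$ presents exactly $j_0,\dots,j_m$). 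A feasible fractional solution on $I_m$ is a vector $(x_0,\dots,x_m)$ with $x_k\in[0,1]$ and $\sum_k x_k\le B_i=1$, of value $\sum_k 2^k x_k\le 2^m\sum_k x_k\le 2^m$, with equality at $x_m=1$; hence $\mathrm{OPT}(I_m)=2^m$.

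Next I would use the fact that an online algorithm must commit to the allocation of each item on arrival, knowing only the prefix seen so far. Thus the (expected, if the algorithm is randomized) fraction $\bar x_k$ it assigns to $j_k$ is the \emph{same} number in every instance $I_m$ with $m\ge k$ — the common prefix $j_0,\dots,j_k$ determines it — and feasibility forces $\sum_{k=0}^{K}\bar x_k\le 1$. If the algorithm were $c$-competitive it would have to satisfy $\sum_{k=0}^{m}2^k\bar x_k\ \ge\ c\cdot\mathrm{OPT}(I_m)=c\,2^m$ for \emph{every} $m\in\{0,\dots,K\}$ simultaneously.

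The contradiction then comes from a one-line averaging step: divide the $m$-th inequality by $2^m$ and sum over $m$,
\begin{equation}
c(K+1)\ \le\ \sum_{m=0}^{K}\sum_{k=0}^{m}2^{k-m}\bar x_k\ =\ \sum_{k=0}^{K}\bar x_k\sum_{m=k}^{K}2^{-(m-k)}\ \le\ 2\sum_{k=0}^{K}\bar x_k\ \le\ 2,
\end{equation}
so $c\le 2/(K+1)=O(1/\log\eta)$. Since a $c$-competitive algorithm is in particular $c$-competitive on each $I_m$, this bounds the worst-case competitive ratio over this family, hence over all instances with $w_{i,j}\le b_{i,j}\le\eta\,w_{i,j}$, by $O(1/\log\eta)$. (If $\eta<2$ the statement is vacuous; one could also replace the base $2$ by any constant $a>1$ to tune the constant, but this is unnecessary for the stated bound.)

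The routine parts — evaluating $\mathrm{OPT}(I_m)$ and the summation — are immediate; the step that needs care is the reduction itself: justifying that it suffices to beat the algorithm on the nested family, i.e.\ that the algorithm's behaviour on a common prefix is genuinely forced to coincide across $I_k,I_{k+1},\dots,I_K$ (an oblivious, non-adaptive adversary observation), and that for randomized algorithms passing to the marginal expectations $\bar x_k$ and using linearity is legitimate under the convention $\min_{\mathscr I}\mathbb E[\mathbf{ALG}(\mathscr I)]/\mathbf{OPT}(\mathscr I)$. This is exactly the worst-case regime in which the theorem is stated, so the obstacle is conceptual rather than computational.
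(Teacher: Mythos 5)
Your proof is correct and follows essentially the same route as the paper: a single buyer, a nested family of instances whose bid-to-weight ratios grow geometrically, and a factor-revealing averaging step ($\sum_m c \le \sum_m \sum_{k\le m} 2^{k-m}\bar x_k \le 2$) yielding $c = O(1/\log\eta)$. The only cosmetic difference is that the paper varies the weight ($b=1$, $w=2^{-t}$, with $2^t$ identical items in bundle $t$) and truncates by zeroing later bids, whereas you vary the bid ($w=1$, $b=2^k$, one item per level) and truncate the stream; after identifying your $\bar x_k$ with the paper's $c_t$, the two arguments coincide line by line.
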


Now, we present a method of enhancing our previous constant-competitive algorithms for OnBAP to be applicable in OnGAP, under both worst-case and random order arrival models. We have the following theorem, which shows the upper-bound in Theorem~\ref{thm-ongap1} is tight under worst-case model. We leave the proof to the Appendix.
\begin{theorem}
\label{thm-ongap2}
Suppose $\texttt{ALG}$ is a $c$-competitive online algorithm for OnBAP under either the worst-case or random order arrival model. Now, consider the following randomized online algorithm under the mode in which $\texttt{ALG}$ is $c$-competitive:
\begin{itemize}
\item At initialization time, sample $s\in \{0,1,\ldots, \lfloor \log(\eta) \rfloor\}$ uniformly at random. (The base of the logarithm is $2$.)
\item Once $j$ arrives, discard the bids $b_{i,j}$ s.t. $\lfloor \log(b_{i,j}/w_{i,j}) \rfloor\neq s$
\item Run $\texttt{ALG}$ on undeleted $(i,j)$ pairs assuming $\text{bids}=\text{weights}=\{w_{i,j}\}$.
\end{itemize}
This algorithm achieves a competitive ratio of $\frac{c}{2(1+ \lfloor \log(\eta) \rfloor)}$ in expectation.
\end{theorem}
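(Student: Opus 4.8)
The plan is to compare, instance by instance, the optimum of the full OnGAP instance with the expected objective value achieved by the randomized reduction algorithm. Fix an instance $\mathscr{I}$ of OnGAP with $w_{i,j}\le b_{i,j}\le \eta w_{i,j}$ for all $(i,j)\in E$. For each $s\in\{0,1,\ldots,\lfloor\log\eta\rfloor\}$ let $E_s=\{(i,j)\in E : \lfloor\log(b_{i,j}/w_{i,j})\rfloor = s\}$, so that $E=\bigsqcup_s E_s$. Let $\mathscr{I}_s$ be the OnBAP instance obtained by keeping only the pairs in $E_s$ and setting both bids and weights equal to $\{w_{i,j}\}$. Since $\texttt{ALG}$ is $c$-competitive on $\mathscr{I}_s$ (in whichever model we are in), the algorithm run conditionally on the sampled value $s$ produces a feasible OnGAP allocation of value at least $c\cdot\textbf{OPT}(\mathscr{I}_s)$ measured in the $w$-objective. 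The crucial point in this step is that a feasible allocation for $\mathscr{I}_s$ is automatically feasible for the OnGAP instance — each buyer's consumption is $\sum_j w_{i,j}x_{i,j}\le B_i$, and each item's $\sum_i x_{i,j}\le 1$ — and moreover, because $b_{i,j}\ge w_{i,j}$ on every retained edge, the true OnGAP value of that allocation, $\sum_{(i,j)\in E_s} b_{i,j}x_{i,j}$, is at least its $w$-value. Hence, conditioned on $s$, the algorithm's OnGAP value is at least $c\cdot\textbf{OPT}(\mathscr{I}_s)$.

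Next I would lower-bound $\sum_s \textbf{OPT}(\mathscr{I}_s)$ in terms of $\textbf{OPT}(\mathscr{I})$. Let $\{x^*_{i,j}\}$ be an optimal fractional OnGAP solution, with value $\textbf{OPT}(\mathscr{I})=\sum_{(i,j)\in E} b_{i,j}x^*_{i,j}$. Restricting $x^*$ to the edges of $E_s$ yields a feasible solution for $\mathscr{I}_s$ (both budget and item constraints only get relaxed by dropping edges, and the $w$-constraint is exactly the one $x^*$ satisfies), whose $w$-value is $\sum_{(i,j)\in E_s} w_{i,j}x^*_{i,j}$. On an edge in $E_s$ we have $b_{i,j}< 2^{s+1} w_{i,j}\le 2\cdot 2^{s}w_{i,j}\le 2 w_{i,j}\cdot(b_{i,j}/w_{i,j})$... more directly, $b_{i,j}/w_{i,j} < 2^{s+1} \le 2\cdot 2^{s} \le 2\,(b_{i,j}/w_{i,j})$, so $w_{i,j}\ge \tfrac12 \cdot 2^{-s} b_{i,j}$; this gives
\[
\textbf{OPT}(\mathscr{I}_s)\ \ge\ \sum_{(i,j)\in E_s} w_{i,j}x^*_{i,j}\ \ge\ \frac{1}{2^{s+1}}\sum_{(i,j)\in E_s} b_{i,j}x^*_{i,j}.
\]
A clean way to finish is to argue that $\textbf{OPT}(\mathscr{I})\le (1+\lfloor\log\eta\rfloor)\cdot\max_s \textbf{OPT}(\mathscr{I}_s')$ where $\mathscr{I}_s'$ has $w$-value restriction — but the sharpest route is the standard "bucket the optimum and keep the best scale" argument: since $\textbf{OPT}(\mathscr{I})=\sum_s \sum_{(i,j)\in E_s} b_{i,j}x^*_{i,j}$ is a sum over $1+\lfloor\log\eta\rfloor$ buckets, some bucket $s^\star$ carries at least a $\tfrac{1}{1+\lfloor\log\eta\rfloor}$ fraction of it. For an edge in $E_{s^\star}$, $b_{i,j}/w_{i,j}\in[2^{s^\star},2^{s^\star+1})$, so within that one bucket all ratios agree up to a factor $2$, and therefore $\textbf{OPT}(\mathscr{I}_{s^\star})\ge \sum_{(i,j)\in E_{s^\star}} w_{i,j}x^*_{i,j}\ge \tfrac12 \sum_{(i,j)\in E_{s^\star}} b_{i,j}x^*_{i,j}\ge \frac{1}{2(1+\lfloor\log\eta\rfloor)}\textbf{OPT}(\mathscr{I})$.

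Finally I would assemble the pieces. Since $s$ is sampled uniformly from a set of size $1+\lfloor\log\eta\rfloor$,
\[
\mathbb{E}[\texttt{ALG's OnGAP value}]\ =\ \frac{1}{1+\lfloor\log\eta\rfloor}\sum_{s} \mathbb{E}[\text{value}\mid s]\ \ge\ \frac{c}{1+\lfloor\log\eta\rfloor}\cdot\textbf{OPT}(\mathscr{I}_{s^\star})\ \ge\ \frac{c}{2(1+\lfloor\log\eta\rfloor)^2}\textbf{OPT}(\mathscr{I}),
\]
which is weaker than claimed; to recover the stated bound I would instead sum over \emph{all} $s$ rather than just keep $s^\star$: $\mathbb{E}[\text{value}]\ge \frac{c}{1+\lfloor\log\eta\rfloor}\sum_s \textbf{OPT}(\mathscr{I}_s)\ge \frac{c}{1+\lfloor\log\eta\rfloor}\sum_s \frac{1}{2}\cdot 2^{-s}\sum_{(i,j)\in E_s}b_{i,j}x^*_{i,j}$, which is still not obviously $\ge \frac{c}{2(1+\lfloor\log\eta\rfloor)}\textbf{OPT}(\mathscr{I})$ because of the $2^{-s}$ weights. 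The resolution — and the step I expect to be the main obstacle — is that the reduction should be read as operating in the \emph{rescaled} instance: when we run $\texttt{ALG}$ on bucket $s$ we may equivalently take bids $=$ weights $= 2^{s} w_{i,j}$ (a uniform rescaling of both sides of every constraint changes nothing for $\texttt{ALG}$), so $\textbf{OPT}(\mathscr{I}_s)$ effectively carries the factor $2^s$ and the $w$-value of the retained allocation is within a factor $2$ of its true $b$-value on \emph{that} bucket. With this reading, conditioned on $s$ the algorithm's OnGAP value is at least $\tfrac{c}{2}\sum_{(i,j)\in E_s} b_{i,j}x^*_{i,j}$, and averaging over the $1+\lfloor\log\eta\rfloor$ equally likely values of $s$ telescopes the bucket sum back to $\textbf{OPT}(\mathscr{I})$, yielding the competitive ratio $\frac{c}{2(1+\lfloor\log\eta\rfloor)}$. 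I would therefore state the reduction carefully so that this rescaling is explicit, verify feasibility and the factor-$2$ value loss on each bucket, and then the averaging is immediate.
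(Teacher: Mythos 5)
Your final resolution is correct and lands on essentially the same argument as the paper: conditioned on the sampled $s$, the algorithm's OnGAP value is at least $\tfrac{c}{2}\sum_{(i,j)\in E_s}b_{i,j}x^*_{i,j}$, and averaging over the $1+\lfloor\log\eta\rfloor$ equally likely values of $s$ recovers $\mathrm{OPT}$. Your first two routes (keep only the best bucket; sum over $s$ using only $b_{i,j}\ge w_{i,j}$) are indeed dead ends, as you diagnose, and the missing ingredient is the factor $2^s$. However, the paper gets that factor without any ``rescaled instance'' detour: on every retained edge $(i,j)\in E_s$ one has $2^s w_{i,j}\le b_{i,j}<2^{s+1}w_{i,j}$, so the algorithm's true $b$-value dominates $2^s$ times the $w$-value it is measured in, the $w$-value of its allocation is at least $c$ times $\sum_{(i,j)\in E_s}w_{i,j}x^*_{i,j}$ by competitiveness (the restriction of $x^*$ to $E_s$ is feasible for the OnBAP instance), and $2^s w_{i,j}>\tfrac12 b_{i,j}$ closes the loop. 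This is just multiplying both sides of the competitiveness inequality by $2^s$ as bookkeeping; your phrasing ``we may equivalently take bids $=$ weights $=2^s w_{i,j}$'' reads as actually running $\texttt{ALG}$ on a scaled instance, which is not needed and would implicitly require $\texttt{ALG}$ to be scale-invariant --- a property a generic $c$-competitive OnBAP algorithm need not have. State the chain of inequalities directly on the unscaled instance and the subtlety disappears.
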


While the proposed algorithm in Theorem~\ref{thm-ongap2} is a randomized algorithm, the randomization happens only in the initialization step. Thus, it can easily be derandomized by running each of the $1+\lfloor \log(\eta) \rfloor$ possible versions of the randomized algorithm (i.e., one for each choice of the initialization step) in parallel, and outputting a fractional allocation for each item $j$ which is the unweighted average of the fractional allocations computed by the different versions of the randomized algorithm.

\bibliography{online-alloc}
\bibliographystyle{plain}
\section*{Appendix}
\subsection*{Proof of Theorem~\ref{th4}}
The proof is based on primal-dual method. Suppose we construct a dual solution as follows. Initialize $\alpha_i=0,i\in L$ and $\beta_j=0,j\in R$. Once $j$ arrives, update the dual variables as follows, where $l_j$ denotes the final water-level of vertices whose water-level has been increased during processing $j$:
\begin{align}
&\alpha_i=G(y_{i,\text{new}})/F,~~~ \forall i \in N(j),\nonumber\\
&\beta_j=1-G(l_j)/F=1-G(y_{i,\text{new}})/F,\forall i \in L(j).
\end{align}
in which $G(t)\triangleq \int_0^t g(x)dx$ for $t\in [0,1]$. Following the fact that water-filling raises the minimum water-level continuously, $\alpha_i+\beta_j\geq G(l_j)/F+1-G(l_j)/F=1$ at the time that the algorithm finishes processing $j$, for every $i\in N(j)$. After that $\alpha_i$ doesn't decrease and hence the dual solution is always feasible. To compare the value of dual and primal, first suppose $G$ has continuous monotone non-decreasing second derivative in $[0,1]$ (i.e. $g'$ is continuous and non-decreasing). Hence we have the following Taylor approximation,
\begin{equation}
\forall x_1\leq x_2\in[0,1],x_1\leq x_2:G(x_2)=G(x_1)+g(x_2)(x_2-x_1)-\tfrac12g'(\delta)(x_2-x_1)^2
\end{equation}
for some $\delta\in[x_1,x_2]$.  Now, look at the time that $j$ arrives. There are two cases. In the first case, the arriving item $j$ will be fully matched ($(\Delta\text{primal})_j$=1), and so we have
\begin{align}
(\Delta\text{dual})_j&=1-G(l_j)/F+1/F\sum_{i\in L(j)}(G(y_{i,\text{new}})-G(y_{i,\text{old}}))\nonumber\\
&\overset{(1)}{=}1/F-g(l_j)/F+g(l_j)/F\sum_{i\in L(j)}{(y_{i,\text{new}}-y_{i,\text{old}})}-\tfrac12g'(\delta)/F\sum_{i\in L(j)}{(y_{i,\text{new}}-y_{i,\text{old}})^2}\nonumber\\
&\overset{(2)}{\leq}1/F-\tfrac12g'(0)/F\sum_{i\in L(j)}{x_{i,j}^2}\overset{(3)}{\leq} 1/F-\frac{g'(0)}{2F\lvert L(j)\rvert}(\sum_{i\in L(j)}{x_{i,j}})^2\nonumber\\
&\overset{(4)}{\leq} \frac{(\Delta\text{primal})_j}{F}-\frac{g'(0)}{2dF}(\Delta\text{primal})_j^2
\end{align}
in which equality (1) holds due to integral equation (\ref{inteq1}), inequality (2) comes from the monotonicity of $g'$, inequality (3) is coming from the well-known Cauchy-Schwarz inequality, and inequality (4) is true because the degree of each online vertex is at most $d$.
In second case, $(\Delta\text{primal})_j<1$ (and hence $l_j=1$). So we have
\begin{align}
(\Delta\text{dual})_j&=1-G(1)/F+1/F\sum_{i\in L(j)}(G(1)-G(y_{i,\text{old}}))\nonumber\\
&\overset{(1)}{=}g(1)/F\sum_{i\in L(j)}{(1-y_{i,\text{old}})}-\tfrac12g'(\delta)/F\sum_{i\in L(j)}{(1-y_{i,\text{old}})^2}\nonumber\\
&\leq \frac{(\Delta\text{primal})_j}{F}-\frac{g'(0)}{2dF}(\Delta\text{primal})_j^2
\end{align}
in which equality (1) holds due to the integral equation (\ref{inteq1}) and the rest as in the first case. From both cases, we can conclude the following.
\begin{equation}
\text{dual}\leq \frac{\text{primal}}{F}-\frac{g'(0)}{2dF}\sum_{j\in R}(\Delta\text{primal})_j^2
\label{band1}
\end{equation}	
  Now, let's define a partitioning of $R$ as following.  $J_1\triangleq\{j\in R: (\Delta\text{primal})_j \geq 1/2\}$ and $J_2\triangleq\{j\in R: (\Delta\text{primal})_j < 1/2\}$. By the fact that water-filling is greedy, no $i\in L$ can receive water from a vertex in $J_2$ until its water-level is at least $1/2$. So, $\text{primal}=\sum_{j\in R}(\Delta\text{primal})_j\leq 2 \sum_{j\in J_1}(\Delta\text{primal})_j$. Hence, we have
\begin{equation}
\label{band2}
\sum_{j\in R}(\Delta\text{primal})_j^2\geq \sum_{j\in J_1}(\Delta\text{primal})_j^2\geq \tfrac12\sum_{j\in J_1}(\Delta\text{primal})_j\geq \tfrac14 (\text {primal})
\end{equation}
where the second inequality is true by definition of $J_1$. The theorem is then proved by combining (\ref{band1}) and (\ref{band2}), and substituting $g(x)=e^{x-1}$ and $F=\frac{e-1}{e}$.\qed
\subsection*{Proof of Lemma~\ref{lemma-igreedy}}
For an input sequence $\textbf{seq}$, define $\hat{\textbf{seq}}$ to be $\textbf{seq}$ with all elements $j$ deleted if item $j$ was not assigned to the maximum bidder in \texttt{I-greedy}($\textbf{seq}$). Clearly, 
\texttt{I-greedy}$(\textbf{seq})\equiv$ \texttt{I-greedy}$(\hat{\textbf{seq}})\equiv$ \texttt{AM-greedy}$(\hat{\textbf{seq}})$ by definition. Let $\{\hat{\alpha}_i,\hat{\beta_j}\}$ be the set of dual variables constructed as in the proof of Theorem~\ref{th3} for \texttt{AM-greedy}$(\hat{\textbf{seq}})$. Now, define the following dual variables for the input sequence $\textbf{seq}$.
\begin{equation}
\alpha_i=\hat{\alpha}_i, \forall i \in L,~~~~\beta_j= \left\{ \begin{array}{ll}
\hat{\beta}_j &\mbox{ if $j \in \textbf{seq}$} \\
(\frac{e-1}{e})b_{i,j} &\mbox{ otherwise}
\end{array} \right.
\end{equation}
As $\{\hat{\alpha}_i,\hat{\beta_j}\}$ is $\frac{e-1}{e}$-feasible in expectation for the input sequence $\hat{\textbf{seq}}$, $\{{\alpha}_i,{\beta_j}\}$ is $\frac{e-1}{e}$-feasible in expectation for the input sequence ${\textbf{seq}}$. Moreover, the objective value of $\{ {\alpha}_i,{\beta_j}\}$ differs from the objective value of  $\{\hat{\alpha}_i,\hat{\beta_j}\}$ by at most the sum of the maximum bids of those vertices $j$ that haven't been allocated in \texttt{I-greedy}$({\textbf{seq}})$, which happens only if the budget of the maximum neighboring bidder of $j$ is almost exhausted, i.e. $y_{i}+b_{i,j}>B_i$ in which $i$ is the maximum bidder. Hence, one can conclude that the objective value of $\{{\alpha}_i,{\beta_j}\}$ is at most equal to $(1+\max {\frac{b_{i,j}}{B_i}})$ times the objective value of $\{\hat{\alpha}_i,\hat{\beta_j}\}$. The lemma then follows as the objective value of $\{\hat{\alpha}_i,\hat{\beta_j}\}$ is equal to the objective value of \texttt{I-greedy}$(\textbf{seq})$.\qed

\subsection{Proof of Theorem~\ref{thm-ongap1}}
Pick any arbitrary algorithm, whose objective value is denoted by $\mathtt{ALG}$, and consider the following input. There is one buyer in the offline side, and $n=2^{k}-1$ items in the online side, for some $k\in \mathbb{N}$. The ordered sequence of items are partitioned into $k$ bundles $0,1,\ldots,k-1$.  Bundle $t$ has $2^t$ items, all with the same bid of $1$ and weight of $2^{-s}$. Bundles are arriving in the order of their indices, from $0$ to $k-1$. When algorithm runs on this input, let 
$$c_t=\sum_{j\in \text{bundle}~ t}{w_{1,j}x_{1,j}}.$$
Note that due to feasibility of the algorithm, $\sum_{t=0}^{k-1}{c_t}\leq 1$. Now, run the algorithm on a modified input w.r.t parameter $0\leq  s\leq k-1$. Let bundles $0,1,\ldots, s$ be the same, but bundles $s+1,\ldots,k-1$ have zero bids. Obviously, for this input instance $$\text{OPT}=2^{s},~~\mathtt{ALG}=\sum_{t=0}^{s} c_t 2^t.$$
 If we denote the competitive ratio by $\alpha$, we have $\alpha=\sum_{t=0}^{s} c_t 2^{t-s}$. Now,  we can construct $k$ input sequences for different values of $s$, and calculate the competitive ratio for each of them. Hence, we have the following factor revealing linear program that upper-bounds the competitive ratio of any algorithm.
 \begin{align}
 \label{fact-lp}
& \max {\alpha}~~\text{s.t.}\nonumber\\
&\sum_{t=0}^{k-1}{c_t}\leq 1,\nonumber \\
 &\alpha \leq \sum_{t=0}^{s} c_t 2^{t-s},~~ s\in\{0,1,\ldots,k-1\},\nonumber \\
&\alpha\geq 0,\{ c_t\}\geq 0
 \end{align}
Summing up the left hand side and right hand side of the second set of constraints in  (\ref{fact-lp}) for different values of $s$, we have
$$k \alpha\leq \sum_{s=0}^{k-1}\sum_{t=0}^{s} c_t 2^{t-s}\leq 2\sum_{s=0}^{k-1}c_s\leq 2$$
in which the second inequality is coming from re-arranging the terms in the sum, and the last inequality holds as $\sum_{t=0}^{k-1}{c_t}\leq 1$. Hence, $\alpha^*\leq \frac{2}{k}$. This completes the proof, since in all of the input sequences, $1\leq\frac{b_{i,j}}{w_{i,j}}\leq 2^k$.\qed
\subsection{Proof of Theorem~\ref{thm-ongap2}}
Obviously, the output of the algorithm is feasible. To show the competitive ratio, partition the edge set $E$ into $E_0,E_1,\ldots,E_{\lfloor\log(\eta)\rfloor}$ s.t.
$$ (i,j)\in E_s \iff \lfloor\log(\tfrac{b_{i,j}}{w_{i,j}})\rfloor=s$$
Furthermore, suppose $\{x_{i,j}\}$ is the algorithm's allocation and $\{x^*_{i,j}\}$ is the optimal offline allocation of the original OnGAP. Hence, $\text{OPT}=\sum_{(i,j)\in E}{b_{i,j}x^*_{i,j}}$.
Now, conditioned on $s$ look at the remaining bids in $E_s$. We have 
\begin{equation}
(i,j)\in E_s \iff s\leq \log(\tfrac{b_{i,j}}{w_{i,j}}) < s+1 \iff 2^{s}\leq \tfrac{b_{i,j}}{w_{i,j}} < 2^{s+1}
\end{equation}
Define $\tilde{b}_{i,j} \triangleq 2^s w_{i,j}$. We have the following lemma.
\begin{lemma} Conditioned on sampling $s$ in the first step, our algorithm gets
$$\mathbb{E}\{\sum_{(i,j)\in E}{b_{i,j}x_{i,j}}| s\}\geq \frac{c}{2}\sum_{(i,j)\in E_s}{b_{i,j}x^*_{i,j}}$$.
\end{lemma}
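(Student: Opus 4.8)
The plan is to show that, conditioned on sampling the index $s$, running $\texttt{ALG}$ on the edge set $E_s$ with bids and weights both set to $\{w_{i,j}\}$ recovers a constant fraction of the OnGAP optimum restricted to $E_s$. First I would observe that the restricted instance $(E_s, \text{bids}=\text{weights}=\{w_{i,j}\})$ is a legitimate OnBAP instance: the capacity constraints $\sum_{j} w_{i,j} x_{i,j} \le B_i$ are exactly the OnBAP budget constraints, and the packing constraints $\sum_i x_{i,j} \le 1$ are untouched. Since $\texttt{ALG}$ is $c$-competitive on this instance (under whichever arrival model we are in — note that discarding edges and relabeling bids does not disturb the random order of arrivals of the items, since items still arrive in the same random order), we get
\begin{equation}
\mathbb{E}\Big\{\sum_{(i,j)\in E_s} w_{i,j} x_{i,j}\Big\}\;\ge\; c\cdot \texttt{OPT}_s^{\mathrm{w}},
\end{equation}
where $\texttt{OPT}_s^{\mathrm{w}}$ is the optimal value of that OnBAP instance.

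Next I would relate $\texttt{OPT}_s^{\mathrm{w}}$ to $\sum_{(i,j)\in E_s} b_{i,j} x^*_{i,j}$, the contribution of the $E_s$-edges to the true OnGAP optimum. The key is the defining property of $E_s$, namely $2^s w_{i,j} \le b_{i,j} < 2^{s+1} w_{i,j}$, so that $\tilde b_{i,j} = 2^s w_{i,j}$ satisfies $\tilde b_{i,j} \le b_{i,j} < 2 \tilde b_{i,j}$. Restricting the OnGAP optimal solution $\{x^*_{i,j}\}$ to the edges of $E_s$ gives a feasible solution for the instance $(E_s, \text{bids}=\text{weights}=\{w_{i,j}\})$ — feasibility is inherited since the capacity and packing constraints for this sub-instance are a subset of those of the original OnGAP. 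Its value in that instance is $\sum_{(i,j)\in E_s} w_{i,j} x^*_{i,j} = 2^{-s}\sum_{(i,j)\in E_s} \tilde b_{i,j} x^*_{i,j} \ge 2^{-s-1}\sum_{(i,j)\in E_s} b_{i,j} x^*_{i,j}$. Hence $\texttt{OPT}_s^{\mathrm{w}} \ge 2^{-s-1}\sum_{(i,j)\in E_s} b_{i,j} x^*_{i,j}$.

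Finally I would convert the algorithm's $w$-weighted gain back to its true $b$-weighted objective: for each $(i,j)\in E_s$ we have $b_{i,j}\ge 2^s w_{i,j}$, so $\sum_{(i,j)\in E_s} b_{i,j} x_{i,j} \ge 2^s \sum_{(i,j)\in E_s} w_{i,j} x_{i,j}$, and the edges outside $E_s$ were discarded so contribute $0$ to the algorithm's allocation. Chaining the three estimates,
\begin{equation}
\mathbb{E}\Big\{\sum_{(i,j)\in E} b_{i,j} x_{i,j}\;\Big|\;s\Big\}\;\ge\; 2^s\,\mathbb{E}\Big\{\sum_{(i,j)\in E_s} w_{i,j} x_{i,j}\Big\}\;\ge\; 2^s\cdot c\cdot \texttt{OPT}_s^{\mathrm{w}}\;\ge\; \frac{c}{2}\sum_{(i,j)\in E_s} b_{i,j} x^*_{i,j},
\end{equation}
which is exactly the claimed bound. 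I do not expect a serious obstacle here; the one point that needs a little care is checking that the competitive guarantee of $\texttt{ALG}$ survives the reduction in the random-order model — i.e., that conditioning on $s$ and deleting edges leaves the items arriving in uniformly random order with respect to the sub-instance — but this is immediate because the deletion of edges (and the choice of $s$) is independent of the arrival permutation. Summing the lemma over $s$ with the uniform prior on $\{0,\dots,\lfloor\log(\eta)\rfloor\}$, together with $\sum_s \sum_{(i,j)\in E_s} b_{i,j}x^*_{i,j} = \texttt{OPT}$, then yields the overall ratio $\frac{c}{2(1+\lfloor\log(\eta)\rfloor)}$ stated in Theorem~\ref{thm-ongap2}.
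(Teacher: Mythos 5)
Your proof is correct and follows essentially the same argument as the paper's: both exploit that for $(i,j)\in E_s$ one has $b_{i,j}\ge 2^s w_{i,j} = \tilde b_{i,j} \ge \tfrac12 b_{i,j}$, invoke the $c$-competitiveness of \texttt{ALG} on the restricted OnBAP instance, and use that $\{x^*_{i,j}\}$ restricted to $E_s$ is feasible for that instance. Your write-up merely spells out the intermediate quantity $\texttt{OPT}_s^{\mathrm{w}}$ and the random-order compatibility point more explicitly than the paper's one-line chain of inequalities.
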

\begin{proof} We have
$$\mathbb{E}\{\sum_{(i,j)\in E}{b_{i,j}x_{i,j}}| s\} \geq \mathbb{E}\{2^s\sum_{(i,j)\in E}{w_{i,j}x_{i,j}}| s\}\geq \mathbb{E}\{2^s c\sum_{(i,j)\in E_s}{w_{i,j}x^*_{i,j}}| s\}=c \sum_{(i,j)\in E_s}{\mathbb{E}\{\tilde{b}_{i,j}x^*_{i,j}}| s\}$$
and the proof then follows as $\tilde{b}_{i,j}\geq \frac{1}{2}b_{i,j}$. \qed
\end{proof}
To complete the proof of Theorem~\ref{thm-ongap2} note that $\text{OPT}=\sum_{s}{\sum_{(i,j)\in E_s}}{b_{i,j}x^*_{i,j}}$. Hence we get
\begin{align}
\mathbb{E}\{\sum_{(i,j)\in E}{b_{i,j}x_{i,j}}\}&=\sum_{s}\Pr\{s\}\mathbb{E}\{\sum_{(i,j)\in E}{b_{i,j}x_{i,j}}| s\}\geq \frac{1}{1+\lfloor \log(\eta)\rfloor}\sum_{s}{\frac{c}{2}\sum_{(i,j)\in E_s}{b_{i,j}x^*_{i,j}}}\nonumber\\
&=\frac{c}{2(1+\lfloor \log(\eta)\rfloor)}\text {OPT}
\end{align}
which completes the proof.\qed
\end{document}